\newcommand{\Output}[1]{\mathsf{output}~#1}
\newcommand{\Input}[1]{\mathsf{input}~#1}
\newcommand{\Skip}{\ensuremath{\mathsf{skip}}}
\newcommand{\Assign}[2]{\ensuremath{#1 := #2}}
\newcommand{\Seq}[2]{\ensuremath{#1;#2}}
\newcommand{\Ifthenelse}[3]
{\ensuremath{\mathsf{if~} #1 \mathsf{~then~} #2 \mathsf{~else~} #3}}
\newcommand{\While}[2]{\ensuremath{\mathsf{while~} #1 \mathsf{~do~} #2}}
\newcommand{\true}{\ensuremath{\mathsf{true}}}
\newcommand{\eval}[2]{\llbracket #1 \rrbracket #2}
\newcommand{\istrue}[2]{\ensuremath{#2 \models #1}}
\newcommand{\isfalse}[2]{\ensuremath{#2 \not\models #1}}
\newcommand{\update}[3]{\ensuremath{#1[#2\mapsto #3]}}
\newcommand{\res}{\mathit{res}}
\newcommand{\resd}{\mathit{res}_\mathrm{r}}
\newcommand{\resc}{\mathit{res}_\mathrm{c}}
\newcommand{\ret}[1]{\mathit{ret}~#1}
\newcommand{\delay}[1]{\delta~#1}
\newcommand{\inp}[1]{\mathit{in}~#1}
\newcommand{\outp}[2]{\mathit{out}~#1~#2}
\newcommand{\retd}[1]{\mathit{ret}_\mathrm{r}~#1}
\newcommand{\inpd}[1]{\mathit{in}_\mathrm{r}~#1}
\newcommand{\outpd}[2]{\mathit{out}_\mathrm{r}~#1~#2}
\newcommand{\retc}[1]{\mathit{ret}_\mathrm{c}~#1}
\newcommand{\inpc}[1]{\mathit{in}_\mathrm{c}~#1}
\newcommand{\outpc}[2]{\mathit{out}_\mathrm{c}~#1~#2}
\newcommand{\bul}{\bullet}
\newcommand{\up}{\mathit{up}}
\newcommand{\upD}{\mathit{up_r}}
\newcommand{\echoT}{\mathit{echo}}
\newcommand{\echoD}{\mathit{echo'}}
\newcommand{\add}{\mathit{rep}}
\newcommand{\addopt}{\mathit{rep'}}
\newcommand{\addD}{\mathit{rep_r}}
\newcommand{\summ}{\mathop{sum}}
\newcommand{\SmultL}{\mathsf{mult}}
\newcommand{\SmultH}{\mathsf{mult\_opt}}
\newcommand{\mult}{\mathit{mult}}
\newcommand{\Scount}{\mathsf{count}}
\newcommand{\R}[3]{(#1,#2)~R~#3}
\newcommand{\Rnoargs}{R}
\newcommand{\Res}{\mathit{lconf}}
\newcommand{\Ret}[1]{\mathit{\underline{ret}}~#1}
\newcommand{\Delay}[2]{\underline{\delta}~#1~#2}
\newcommand{\Inp}[2]{\mathit{\underline{in}}~#1~#2}
\newcommand{\Outp}[3]{\mathit{\underline{out}}~#1~#2~#3}
\newcommand{\bism}[2]{#1 \approx #2}
\newcommand{\wkbism}[2]{#1 \cong #2}
\newcommand{\Wkbism}[2]{#1 \cong #2}
\newcommand{\steps}[2]{#1 \downarrow #2}    % convergence
\newcommand{\uresp}[1]{#1\, {\uparrow}}     % divergence
\newcommand{\resp}[1]{#1\, {\Downarrow}}    % responsivenesss
\newcommand{\tresp}[1]{#1\, {\Updownarrow}} % commitedness
\newcommand{\wkbismi}[3]{#2 \mathbin{{\downarrow}#1{\downarrow}} #3}
\newcommand{\wkbismc}[2]{#1 \mathbin{\cong^\circ} #2}
\newcommand{\cwkbism}[2]{#1 \cong_\mathrm{c} #2}
\newcommand{\wkbismnoargs}{\cong }
\newcommand{\wkbismcnoargs}{{\cong^\circ}}
\newcommand{\wkbisminoargs}[1]{{\downarrow}#1{\downarrow}}
\newcommand{\cwkbismnoargs}{\cong_\mathrm{c}}
\newcommand{\exec}[3]{(#1,#2) \Rightarrow #3}
\newcommand{\execD}[3]{(#1,#2) \mathbin{\Rightarrow_\mathrm{r}} #3}
\newcommand{\execC}[3]{(#1,#2) \mathbin{\Rightarrow_\mathrm{c}} #3}
\newcommand{\execDnoargs}{{\Rightarrow_\mathrm{r}}}
\newcommand{\execCnoargs}{{\Rightarrow_\mathrm{c}}}
\newcommand{\execseq}[3]{(#1,#2) \stackrel{*}{\Rightarrow} #3}
\newcommand{\X}[3]{(#1,#2) \mathbin{X} #3}
\newcommand{\execc}[4]{(#2,#3) \mathbin{{\Rightarrow}{\downarrow}(#1)} #4}
\newcommand{\execd}[3]{(#2,#3) ~{\Rightarrow}{\uparrow}(#1)}
\newcommand{\execcni}[3]{(#1,#2) \mathbin{{\Rightarrow}{\downarrow}} #3}
\newcommand{\execdni}[2]{(#1,#2) ~{\Rightarrow}{\uparrow}}
\newcommand{\execcnoargs}{{{\Rightarrow}{\downarrow}}}
\newcommand{\execseqnoargs}{{\stackrel{*}{\Rightarrow}}}
\newcommand{\execdnoargs}{{\Rightarrow}{\uparrow}}
\newcommand{\cnf}[2]{(#1, #2)}
\newcommand{\step}[3]{(#1, #2) \rightarrow #3}
\newcommand{\stepnoargs}{{\rightarrow}}
\newcommand{\redm}[3]{(#1,#2) \rightsquigarrow #3}
\newcommand{\norm}{\mathit{norm}}
\newcommand{\emb}{\mathit{emb}}
\newcommand{\st}{\sigma}
\newcommand{\state}{\mathit{state}}
\newcommand{\Int}{\mathit{Int}}
\newcommand{\rar}{\rightarrow}
\newcommand{\infix}[3]{#2 ~#1 ~#3}
\newcommand{\cmp}[2]{#1 \circ #2}
\newcommand{\maps}[1]{\sigma[#1]}
\newcommand{\map}[2]{#1\mapsto#2}
\newtheorem{lemma}{Lemma}[section]
\newtheorem{corollary}{Corollary}[section]
\newtheorem{proposition}{Proposition}[section]
\newenvironment{proof}{\noindent \textbf{Proof}}{\hfill $\Box$}
\begin{document}

\title{Resumptions, Weak Bisimilarity and Big-Step Semantics for While with
Interactive I/O: \\ An Exercise in Mixed Induction-Coinduction}

\def\titlerunning{Resumptions, Weak Bisimilarity and Big-Step Semantics for While}

\author{
Keiko Nakata and Tarmo Uustalu
\institute{Institute of Cybernetics at Tallinn University of Technology, %\\
Akadeemia tee 21, EE-12618 Tallinn, Estonia} 
\email{\{keiko|tarmo\}@cs.ioc.ee}
}

\def\authorrunning{K.~Nakata \& T.~Uustalu}

\maketitle

\begin{abstract}
  We look at the operational semantics of languages with interactive
  I/O through the glasses of constructive type theory. Following on
  from our earlier work on coinductive trace-based semantics for While
  \cite{NU:trabco}, we define several big-step semantics for While
  with interactive I/O, based on resumptions and termination-sensitive
  weak bisimilarity. These require nesting inductive definitions in
  coinductive definitions, which is interesting both mathematically
  and from the point-of-view of implementation in a proof assistant.
  
  After first defining a basic semantics of statements in terms of
  resumptions with explicit internal actions (delays), we introduce a
  semantics in terms of delay-free resumptions that essentially
  removes finite sequences of delays on the fly from those resumptions
  that are responsive. Finally, we also look at a semantics in terms
  of delay-free resumptions supplemented with a silent divergence
  option.  This semantics hinges on decisions between convergence and
  divergence and is only equivalent to the basic one
  classically.
  We have fully formalized our development in Coq.
\end{abstract}

\section{Introduction}

\emph{Interactive} programs are those programs that take inputs, do some
computation, output results, and iterate this cycle possibly infinitely. 
Operating systems and data base systems are typical examples. 
They are important programs and have attracted formal study
to guarantee their correctness/safety. For instance, 
a web application should protect confidentiality of the data it  processes 
in interaction with possibly untrusted agents, 
and a certified compiler should preserve 
input/output behavior of the source program in the compiled code.
These works call for formal semantics of interactive programs. 

Continuing our previous work \cite{NU:trabco} on a trace-based
coinductive big-step semantics for potentially nonterminating
programs, we present a \emph{constructive} account of interactive
input-output \emph{resumptions}\footnote{The word `resumption' is sometimes
  reserved for denotations of parallel threads. We apply it
  more liberally to datastructures recording evolution in
  small steps. This usage dates back to Plotkin
  \cite{Plo:dom} and was reinforced by Cenciarelli and Moggi~\cite{CM:synmds}.}, their important properties, such as weak
bisimilarity and responsiveness (a program always eventually
performs input or output unless it terminates) and \emph{big-step semantics}
of reactive programs. We devise both constructive-style and
classical-style concepts and identify their relationships.
Classical-style concepts rely on upfront decisions of whether a
computation is going to terminate, make an observable action, i.e.,
perform input or output, or silently diverge. The problem
is generally undecidable. Hence, classical-style concepts tend to be
too strong for constructive reasoning.

Our operational semantics are \emph{resumption-based}.  A resumption
is roughly a tree representing possible runs of a program.  The tree
branches on inputs, each edge corresponding to each possible input,
and has infinitely deep paths if the program may diverge.  We begin
the paper by formalizing important properties of resumptions, among
which (termination-sensitive) weak bisimilarity is the most
interesting one technically, requiring nesting of induction into
coinduction.  We give a constructive-style formulation of weak
bisimilarity and relate it to the classical-style version adapted from
previous work~\cite{KM:weabfs,BPSWZ:rean}.  We then present three
\emph{big-step semantics} for interactive While, i.e., While extended
with input/output statements: a basic semantics which explicitly deals
with internal actions (delay steps) and assigns a resumption for all
configurations (statement-state pairs); a delay-free semantics for
responsive configurations; and a classical-style semantics, which is
total classically for all configurations. The two latter semantics
collapse finite sequences of delay steps on the fly. The
classical-style semantics can in addition recognize silent divergence;
classical-style resumptions include a distinguished element to
represent divergence.  Moreover, all three semantics are equivalent
under suitable assumptions. Our approach with big-step semantics in
terms of resumptions allow for reasoning about operational behaviors
of programs in a syntax-independent way. We therefore argue that it is
more abstract than approaches by means of small-step semantics, or
labelled transition systems (in terms of configurations involving a
residual program or a control point).  To compare our big-step
semantics to more traditional approaches, we also define an
uncontroversial small-step semantics with an associated notion of weak
bisimilarity of configurations and show that it agrees with our basic
big-step semantics. These technical results form the main
contributions of the paper.

Why do we want to be \emph{constructive}? First, let us state that our
choice is neither motivated nor depends on any argument of truth: we
are not claiming in this paper that classical logic is less true than
intuitionistic logic and none of the points we make hinge on this
being the case.  Nevertheless, we do think that working in a
constructive logic is very useful also if one has no philosophical
problem in accepting non-constructive arguments. Our reasons are
these. For us, using constructive logic is primarily a technical way
to be conscious about the principles we depend on in our arguments. We
are by no means limiting ourselves: when we really need some
non-constructive principle in a constructive argument, we can always
explicitly assume this principle (or the specific instance that we
need).  But it so happens that a need for unexpectedly strong
principles is often a sign of some imperfect design choice in the
setup of a theory.
Another reason to be constructive as a semanticist is
that programming is about computable functions only.  In constructive
logic, we do not have to specifically worry about computability: only
computable functions are there and can be spoken about. For example,
the formula $\forall x.\, p\, x \vee \neg (p\, x)$ is not a tautology,
it states that $p$ is a decidable: there is a computable function
mapping any $x$ to a proof of $p\, x$ or a proof of $\neg (p\, x)$ (so
also to yes or no, should one not care about the proofs). In big-step
semantics, although we specify evaluation as a relation in this paper,
it is important for us that it can be turned into a function, or else
we do not capture the intuitive idea that programs represent
computable functions from initial configurations into behaviors.

We have \emph{formalized} the development in Coq version 8.2pl1.
This gives us greater confidence in the correctness of our reasoning, 
in particular regarding the productivity of coinductive proofs, since 
the type checker
of Coq helps us avoid mistakes by ruling out improductivity. 
We rely on Mendler-style coinduction to 
circumvent the limitations imposed by 
syntactic guardedness
approach~\cite{syntacticguarded} of Coq. 
The Coq development is available at \url{http://cs.ioc.ee/~keiko/sophie.tgz}.

The language we consider is the While language extended with input and
output primitives, with statements $s : \mathit{stmt}$ defined inductively by
\[
\small
s ::= 
\Skip \mid \Seq{s_0}{s_1} \mid \Assign{x}{e} \mid \Ifthenelse{e}{s_t}{s_f}
\mid \While{e}{s_t} \mid \Input{x} \mid \Output{e}
\]
We assume given the sets of variables and (pure) expressions, whose
elements are ranged over by metavariables $x$ and $e$ respectively. We
assume the set of values to be the integers, non-zero integers
counting as truth and zero as falsity.  The metavariable $v$ ranges
over values. 
A state, ranged over by $\st$, maps variables to values.
The notation $\update{\st}{x}{v}$ denotes the update of a state $\st$ with $v$
at $x$.  We assume given an evaluation function $\eval{e}{\st}$, which
evaluates $e$ in the state $\st$. We write $\istrue{e}{\st}$ and
$\isfalse{e}{\st}$ to denote that $e$ is true, resp.\ false in $\st$.

% ========================================================================
% ========================================================================
% ========================================================================

\section{Resumptions}
\label{sec:resumption}

We will define operational semantics of interactive While in terms of
states and (interactive input-output) resumptions. Informally, a
resumption is a datastructure that captures all possible evolutions of
a configuration (a statement-state pair), a computation tree branching
according to the external non-determinism resulting from interactive
input.\footnote{There are alternatives. We could have chosen to work,
  e.g., with functions from streams of input values into traces, i.e.,
  computation paths.}

Basic (delayful) \emph{resumptions} $r : \res$ are defined
coinductively by the rules\footnote{We mark inductive definitions by single horizontal rules and coinductive definitions by
  double horizontal rules.}
\[
\small
\infer={ \ret{\st} : \res}{ \st : \state}
\quad
\infer={
  \inp{f} : \res
}{
  f : \Int \rar\ \res
}
\quad
\infer={
  \outp{v}{r} : \res
}{
  v : \Int & r : \res
}
\quad
\infer={
  \delay{r} : \res
}{
  r : \res
}
\]
so a resumption either has terminated with some final state,
$\ret{\st}$, takes an integer input $v$ and evolves into a new
resumption $f~v$, $\inp{f}$, outputs an integer $v$ and evolves into
$r$, $\outp{v}{r}$, or performs an internal action (observable at best
as a delay) and becomes $r$, $\delay{r}$.  For simplicity, we assume
input totality; i.e., input resumptions, represented by total
functions, accept any integers. But we could instead have had them
partial, e.g., by letting the constructor $\mathit{in}$ take the
intended domain of definedness as an additional argument.  
We also
define \emph{(strong) bisimilarity} of two resumptions,
$\bism{r}{r_*}$, coinductively by
\[
\small
\infer={
  \bism{\ret{\st}}{\ret{\st}}
}{}
\quad
\infer={
  \bism{\inp{f}}{\inp{f_*}}
}{
  \forall v.\, \bism{f~v}{f_*~v}
}
\quad
\infer={
  \bism{\outp{v}{r}}{\outp{v}{r_*}}
}{
  \bism{r}{r_*}
}
\quad
\infer={
  \bism{\delay{r}}{\delay{r_*}}
}{
  \bism{r}{r_*}
}
\]
Bisimilarity is straightforwardly seen to be an equivalence. We think
of bisimilar resumptions as equal, i.e., type-theoretically we treat
resumptions as a setoid with bisimilarity as the equivalence
relation\footnote{Classically, strong bisimilarity is equality. But we
work in an intensional type theory where strong bisimilarity of
colists is weaker than equality (just as equality of two functions on
all arguments is weaker than equality of these two
functions). E.g., $\bot$ and $\delay{\bot}$ are only strongly bisimilar.}.
Accordingly, we have to make sure that all functions and
predicates we define on resumptions are setoid functions and
predicates, i.e., insensitive to bisimilarity.

Here are some examples of resumptions, defined by corecursion:
\[
\begin{array}{rclrcl}
\bot &=& \delay{\bot}\\
\add~n &=& \delay{(\delay{(\outp{n}{(\add~n)})})}\\
\addopt~n &=& \delay{(\outp{n}{(\addopt~n)})}\\
\echoT~\st &=& \inp{(\lambda n.~\delay{(\mathit{if} ~n\neq 0~\mathit{then}~\outp{n}(\echoT~\st) 
~\mathit{else} ~\ret{\st})})}
\\
\echoD &=& \inp{(\lambda n.~\delay{(\mathit{if} ~n\neq 0~\mathit{then}~\outp{n}\echoD
~\mathit{else} ~\bot)})}
% \up~n &=& \outp{n}{(\delay{\delay{(\up~(n+1))}})}\\
% \upd~n &=& \outp{n}{(\delay{\upd~(n+1)})}\\
\end{array}
\]
$\bot$ represents a resumption that silently diverges.
$\add$ outputs an integer $n$ forever.
$\addopt$ is similar but has shorter latency.
Both $\echoT$ and $\echoD$ echo input interactively;
the former terminates when the input is 0, whereas the latter diverges in this situation. 

\emph{Convergence}, $\steps{r}{r'}$, states that $r$ converges 
in a finite number of steps to a resumption $r'$, 
which has terminated or 
makes an observable action (performs input/output) as its first move. 
It is defined inductively by
\[
\small
\infer{
  \steps{\ret{\st}}{\ret{\st}}
}{}
\quad
\infer{
  \steps{\inp{f}}{\inp{f_*}}
}{
  \forall v.\, \bism{f~v}{f_*~v}
}
\quad
\infer{
  \steps{\outp{v}{r}}{\outp{v}{r_*}}
}{
  \bism{r}{r_*}
}
\quad
\infer{
  \steps{\delay{r}}{r'}
}{
  \steps{r}{r'}
}
\]
In contrast, \emph{divergence}, $\uresp{r}$, states that
$r$ diverges silently. It is defined coinductively by
\[
\small
\infer={
  \uresp{\delay{r}}
}{
  \uresp{r}
}
\]
For instance, we have 
$\steps{\delay{(\delay{(\ret{\st})})}}{\ret{\st}}$,
$\steps{\add~n}{\outp{n}{(\add~n)}}$  and $\uresp{\bot}$.

Both convergence and divergence are setoid predicates. 
Constructively, it
is not the case that $\forall r.\, 
(\exists r'.\, \steps{r}{r'}) \vee \uresp{r}$, 
which amounts to decidability of convergence. But classically, this dichotomy is true.
In particular, 
$\forall r.\,\neg\, (\exists r'.\, \steps{r}{r'})
\to \uresp{r}$
is constructively provable, but 
$\forall r.\, \neg\, \uresp{r} \to \exists r'.\, \steps{r}{r'}$ holds only classically. 

We can now introduce a useful notion of \emph{responsiveness}.  A
resumption $r$ is responsive, if it keeps converging.  It is defined
coinductively with the help of the convergence predicate by
\[
\small
\infer=[\mathsf{[resp\mbox{-}ret]}]{
  \resp{r}
}{
  \steps{r}{\ret{\st}}
}
\quad
\infer=[\mathsf{[resp\mbox{-}in]}]{
  \resp{r}
}{
  \steps{r}{\inp{f}} & \forall v.\, \resp{(f~v)}
}
\quad
\infer=[\mathsf{[resp\mbox{-}out]}]{
  \resp{r}
}{
  \steps{r}{\outp{v}{r'}} & \resp{r'}
}
\]
For instance, $\add~n$, $\addopt~n$ and $\echoT~\st$ are responsive, 
but $\bot$ and $\echoD$ are not. 

Classically, a resumption is responsive, if it can never evolve into a
diverging resumption.  Indeed, by augmenting the definition of
responsiveness with a divergence option we obtain a classically
tautological predicate, $\tresp{r}$, that we call \emph{commitedness}.
\[
\small
\begin{array}{c}
\infer=[\mathsf{[comm\mbox{-}ret]}]{
  \tresp{r}
}{
  \steps{r}{\ret{\st}}
}
\quad
\infer=[\mathsf{[comm\mbox{-}in]}]{
  \tresp{r}
}{
  \steps{r}{\inp{f}} & \forall v.\, \tresp{(f~v)}
}
\quad
\infer=[\mathsf{[comm\mbox{-}out]}]{
  \tresp{r}
}{
  \steps{r}{\outp{v}{r'}} & \tresp{r'}
}
\quad
\infer=[\mathsf{[comm\mbox{-}div]}]{
  \tresp{r}
}{
  \uresp{r}
}
\end{array}
\]
For a resumption $r$ to be committed, it must be the case that it
always either converges or diverges.  So, classically, any resumption
is committed.

\begin{lemma}
\label{lemma:commitedtotal}
Classically, for all $r$, $\tresp{r}$.
\end{lemma}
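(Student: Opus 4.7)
The plan is to prove the lemma by coinduction on $\tresp{\cdot}$, appealing to classical logic at a single decision point: whether $r$ converges or diverges.

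First I would establish the classical dichotomy $\forall r.\, (\exists r'.\, \steps{r}{r'}) \vee \uresp{r}$, already mentioned just before the lemma. Apply excluded middle to the proposition $\exists r'.\, \steps{r}{r'}$. The nontrivial half, $\forall r.\, \neg (\exists r'.\, \steps{r}{r'}) \to \uresp{r}$, the excerpt notes to be constructively provable; a proof goes by coinduction on $\uresp{\cdot}$, since any non-convergent $r$ cannot be $\ret{\st}$, $\inp{f}$, or $\outp{v}{r''}$ (each converges to itself via the corresponding rule), hence $r = \delay{r_0}$, and $r_0$ inherits non-convergence because any $\steps{r_0}{r'}$ lifts to $\steps{\delay{r_0}}{r'}$ via the last convergence rule, so the corecursive hypothesis gives $\uresp{r_0}$ and thus $\uresp{r}$.

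With the dichotomy in hand, the main proof is a corecursive construction of a derivation of $\tresp{r}$ for arbitrary $r$. At each corecursive step, given $r$, invoke the dichotomy. In the divergent branch apply $\mathsf{[comm\mbox{-}div]}$ and stop. In the convergent branch obtain $r'$ with $\steps{r}{r'}$; by a straightforward induction on the convergence rules, $r'$ must take one of the three forms $\ret{\st}$, $\inp{f_*}$, or $\outp{v}{r_*}$, never $\delay{\cdot}$ (delays are absorbed by the last convergence rule). In the respective cases apply $\mathsf{[comm\mbox{-}ret]}$, $\mathsf{[comm\mbox{-}in]}$ (invoking the coinductive hypothesis on every $f_*\, v$), or $\mathsf{[comm\mbox{-}out]}$ (invoking the coinductive hypothesis on $r_*$).

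The main obstacle is not mathematical but formal: productivity of the corecursion in a proof assistant. Corecursive calls occur inside the premises of $\mathsf{[comm\mbox{-}in]}$ and $\mathsf{[comm\mbox{-}out]}$, applied to $f_*\, v$ and $r_*$ respectively; these continuations sit strictly past an observable constructor in the (possibly delay-prefixed) structure of $r$, so each corecursive step consumes one layer of observable action. This is exactly what the Mendler-style coinduction used in the paper is designed to accept. A secondary bookkeeping point is that $f_*$ and $r_*$ are only bisimilar, not syntactically identical, to the immediate children of $r$, so the argument tacitly relies on $\tresp{\cdot}$ being a setoid predicate, which follows by a routine coinductive argument.
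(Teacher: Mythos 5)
Your proof is correct and takes essentially the same route as the paper's: the paper's entire proof consists of invoking the excluded-middle instance ``convergence is decidable'' ($\forall r.\, (\exists r'.\, \steps{r}{r'}) \vee \neg (\exists r'.\, \steps{r}{r'})$), and your argument---combining it with the constructive fact that non-convergence implies divergence and then building the $\tresp{r}$ derivation corecursively, guarded by the $\mathsf{comm}$ rules---is exactly the detail the paper leaves implicit. One small remark: since the coinduction hypothesis is universally quantified over all resumptions, it applies to $f_*\,v$ and $r_*$ directly, so the closing appeal to committedness being a setoid predicate is unnecessary.
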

\begin{proof}
Specifically, we use an instance of excluded middle, 
$\forall r.\, (\exists r'.\, \steps{r}{r'}) 
\vee \neg (\exists r'.\, \steps{r}{r'})$, 
which amounts to assuming that convergence is decidable. 
\end{proof}

\begin{lemma}
Convergence, divergence, responsiveness and committedness
are setoid predicates. 
\end{lemma}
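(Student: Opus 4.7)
I would handle the four predicates in a natural dependency order: first convergence (inductive), then divergence (coinductive), and finally responsiveness and committedness, each of which delegates its real content to the first two.

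For convergence $\steps{r}{r'}$, which is an inductive binary relation, I proceed by rule induction on the derivation, assuming additionally $\bism{r}{r_*}$ and $\bism{r'}{r'_*}$. In the ret, in and out cases, inversion of the two bisimilarities exposes matching top-level constructors on $r_*$ and $r'_*$; for in and out I combine the rule's original componentwise bisimilarity premise with those supplied by $\bism{r}{r_*}$ and $\bism{r'}{r'_*}$ using symmetry and transitivity of strong bisimilarity, to produce the componentwise bisimilarity required by the target convergence rule. The delay case is the only one that actually needs the induction hypothesis: inverting $\bism{r}{r_*}$ peels off a delay from $r_*$, and the hypothesis applies to its subresumption together with the unchanged $\bism{r'}{r'_*}$.

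For divergence $\uresp{r}$, I give a coinductive proof. Unfolding the single rule forces $r = \delay{r'}$ with $\uresp{r'}$; inverting $\bism{r}{r_*}$ then forces $r_* = \delay{r'_*}$ with $\bism{r'}{r'_*}$. Applying the $\uresp$-rule to produce the outer delay is the productive step, after which the coinductive hypothesis supplies the remaining $\uresp{r'_*}$.

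For responsiveness and committedness no further coinduction is needed. Case-splitting on the rule proving $\resp{r}$ (respectively $\tresp{r}$), the critical premise $\steps{r}{\ldots}$ transports to $\steps{r_*}{\ldots}$ by the convergence-setoid fact just established, while the subordinate $\resp$- / $\tresp$-premises carry over verbatim since the input continuation $f$ and the output continuation $r'$ are the same in the reapplied rule. The only extra case in the committedness argument is comm-div, discharged by the divergence-setoid result. The one step that genuinely requires care is the divergence proof, where the Coq formalization must present the coinduction in Mendler style so that guardedness is evident; everywhere else the reasoning is structural and admits no unproductive cycles.
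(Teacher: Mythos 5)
Your proof is correct and follows the standard route the paper leaves implicit (the lemma is stated without proof, deferred to the Coq development): rule induction for convergence, guarded coinduction for divergence, and then a single inversion-plus-reapplication step for responsiveness and committedness, reusing the unchanged continuations and appealing to the convergence/divergence setoid facts. One small remark: Mendler-style coinduction is not actually needed for the divergence case, since the corecursive call sits directly under the $\uresp{}$ constructor and plain guarded coinduction suffices; the paper reserves Mendler's trick for the nested induction-in-coinduction definitions such as $\wkbismcnoargs$.
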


% ========================================================================
% ========================================================================
% ========================================================================

\section{Weak Bisimilarity}

Two resumptions are weakly bisimilar, if they are bisimilar modulo
collapsing finite sequences of delay steps between observable actions.
It is conceivable that, in practice, weak bisimilarity is what is
needed: one may well be interested only in observable behavior,
disregarding finite delays. For instance, to guarantee correctness of
a compiler optimization, we would want to prove that the optimization
does not change the observable behavior of the source program,
including termination and divergence behaviors, but the optimized code
may perform fewer internal steps and thus be faster.  We therefore
formalize \emph{termination-sensitive} weak bisimilarity, which
distinguishes termination and silent divergence.

Technically, getting the definition of weak bisimilarity right is not
straightforward, especially not in a constructive setting. It requires
both induction and coinduction: we need to collapse a \emph{finite}
number of delay steps between observable actions possibly
\emph{infinitely}. Here we present two equivalent formulations (actually,
we will also give a third one for classical reasoning, which is only
equivalent to the first two classically).  The first is closer to the
formulations typically found in process calculi literature (except
that, in process calculi, one usually works with
termination-insensitive weak bisimilarity).  The second nests
induction into coinduction, exhibiting a useful technique for
implementation in Coq. In our development, we use both formulations
and their equivalence result,
freely choosing the one of the two that facilitates the proof.

The first one, noted $\wkbism{r}{r_*}$, uses coinduction atop the
inductive definition of convergence and is defined by the rules
\[
\small
\begin{array}{c}
\infer={
  \wkbism{r}{r_*}
}{
  \steps{r}{\ret{\st}} & \steps{r_*}{\ret{\st}}
}
\quad
\infer={
  \wkbism{r}{r_*}
}{
  \steps{r}{\inp{f}} & \steps{r_*}{\inp{f_*}} &
  \forall v.\, \wkbism{f~v}{f_*~v}
}
\quad %\\[1ex]
\infer={
  \wkbism{r}{r_*}
}{
  \steps{r}{\outp{v}{r'}} & \steps{r_*}{\outp{v}{r_*'}} &
  \wkbism{r'}{r_*'}
}
\quad
\infer={
  \wkbism{\delay{r}}{\delay{r_*}}
}{
  \wkbism{r}{r_*}
}
\end{array}
\]
so two resumptions are weakly bisimilar if they converge at the same action or can both perform an internal action, with weakly bisimilar residual resumptions.
In particular, two terminating resumptions are derived to be 
weakly bisimilar by a single application of the first rule,
whereas two silently diverging resumptions are weakly bisimilar by corecursive application of the fourth rule.
For instance, we have $\wkbism{\add~n}{\addopt~n}$
but $\wkbism{\echoT~\st}{\echoD}$ does \emph{not} hold. 

\begin{lemma}
For any $r, r'$ and $r_*$, if $\steps{r}{r'}$ and $\uresp{r_*}$
then $\neg~\wkbism{r}{r_*}$.
\end{lemma}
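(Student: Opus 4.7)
The plan is to prove the lemma by induction on the inductive derivation of $\steps{r}{r'}$, with $r_*$ universally generalized in the induction hypothesis. The choice of this induction principle is essentially forced: both $\wkbism{r}{r_*}$ and $\uresp{r_*}$ are defined coinductively, so induction on them would be ill-founded, whereas $\steps{r}{r'}$ is inductively defined and its four-rule structure cleanly drives the case analysis.

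Before starting the main argument, I would record a small auxiliary observation: for any $r_*$, convergence $\steps{r_*}{r_*'}$ and divergence $\uresp{r_*}$ are jointly impossible. This is proved by induction on the convergence derivation. The $\ret{\st}$, $\inp{f}$, and $\outp{v}{r''}$ base cases are immediate because the sole rule for $\uresp{\cdot}$ requires the head constructor to be $\delta$. The $\delta$ case inverts $\uresp{\delay{r''}}$ to $\uresp{r''}$ and closes by the induction hypothesis.

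For the main argument, the three non-$\delta$ base cases are routine: inversion on $\wkbism{r}{r_*}$ is tightly constrained because $r$ does not have the shape $\delay{\cdot}$, so only the matching weak-bisimilarity rule (the first, second, or third, according to the shape of $r$) can apply, and it yields convergence of $r_*$ to a resumption of the corresponding shape, contradicting $\uresp{r_*}$ via the auxiliary. In the step case $r = \delay{r_d}$ with $\steps{r_d}{r'}$, inversion on $\wkbism{\delay{r_d}}{r_*}$ gives four subcases; the first three again force $r_*$ to converge and are ruled out by the auxiliary, while the fourth gives $r_* = \delay{r_{*d}}$ and $\wkbism{r_d}{r_{*d}}$, whence $\uresp{r_{*d}}$ by inversion on $\uresp{r_*}$, and the induction hypothesis applied to $r_d$, $r'$, and $r_{*d}$ delivers the contradiction.

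The main obstacle is precisely this $\delta$ step case: $\wkbism{\delay{r_d}}{r_*}$ can in principle be derived by any of the four weak-bisimilarity rules, not only the fourth, because if $r_d$ itself converges to a return or to an observable action, then $\delay{r_d}$ does too, and the first three rules might apply at the outer $\delta$. One must therefore enumerate all four derivation possibilities rather than tacitly assuming the fourth, and keep the stored convergence witness $r'$ around so that the induction hypothesis can still be instantiated in the final subcase.
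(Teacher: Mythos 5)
Your proof is correct: induction on the convergence derivation $\steps{r}{r'}$ with $r_*$ generalized, the auxiliary mutual exclusivity of $\steps{r_*}{r_*'}$ and $\uresp{r_*}$, and the careful four-way inversion in the $\delta$ case are exactly what the argument needs, and the negation makes induction (rather than coinduction) the forced choice. The paper omits the proof of this lemma (it lives in the accompanying Coq development), but your route is the natural one and there is nothing to flag.
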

As a corollary, we obtain that 
the silently diverging resumption $\bot$ and
resumptions that have terminated, $\ret{\st}$, are not weakly bisimilar. 

The second formulation, denoted $\wkbismc{r}{r_*}$ nests induction
into coinduction. We first define $\wkbisminoargs{X}$ inductively in
terms of $X$, for any relation (read: setoid relation) $X$, and then
define $\wkbismcnoargs$ coinductively in terms of
$\wkbisminoargs{\wkbismcnoargs}$.  For binary relations $X$, $Y$, $X
\subseteq Y$ denotes $\forall x, x_*.\, x \mathbin{X} x_* \to x
\mathbin{Y} x_*$.
\[
\small
\begin{array}{c}
\infer{\wkbismi{X}{\ret{\st}}{\ret{\st}}}{}
\quad
\infer{\wkbismi{X}{\outp{v}{r}}{\outp{v}{r_*}}}{
  \infix{X}{r}{r_*}
}
\quad
\infer{\wkbismi{X}{\inp{f}}{\inp{f_*}}}{
  \forall v.\, \infix{X}{f ~v}{f_* ~v}
}
\quad
\infer{\wkbismi{X}{\delay{r}}{r_*}}{
  \wkbismi{X}{r}{r_*}
}
\quad
\infer{\wkbismi{X}{r}{\delay{r_*}}}{
  \wkbismi{X}{r}{r_*}
}
\\[1ex]
\infer={\wkbismc{r}{r_*}
}{
  X \subseteq \wkbismcnoargs  
  & \wkbismi{X}{r}{r_*}
}
\quad
\infer={\wkbismc{\delay{r}}{\delay{r_*}}}{
  \wkbismc{r}{r_*}
}
\end{array}
\]
Intuitively, $\wkbismi{X}{r}{r_*}$ means that $r$ and $r_*$ converge
to resumptions related by $X$. 
 
In the first rule of $\wkbismcnoargs$, we have used Mendler-style
coinduction 
in order to enable Coq's syntactic guarded corecursion for
$\wkbismcnoargs$.
The natural (Park-style) rule to
stipulate would have been:
\[
\small
\infer={\wkbismc{r}{r_*}
}{
  \wkbismi{\wkbismcnoargs}{r}{r_*}
}
\]
Coq's guardedness condition for induction nested into
coinduction is too weak to work with the Park-style rule: we cannot construct the corecursive functions (coinductive proofs) that we need.
With our definition, the Park-style rule is derivable. 
We can also prove that
$\wkbisminoargs{X}$ is monotone in $X$, which allows us to recover the
natural inversion principle for $\wkbismcnoargs$.

Induction and coinduction can be mixed in several ways. An inductive
definition can be \emph{mutual} with a coinductive definition, if the
occurrence of one predicate in the definition of the other is
contravariant.\footnote{This means looking for a least $X$ and
  greatest $Y$ solving a system of equations $X = F(Y, X)$, $Y = G(X,
  Y)$, where $F$ and $G$ are contravariant in their first arguments
  and covariant in the second arguments.} But this is not our
situation. Instead, in our case, we have an inductive and a
coinductive definition that use each other covariantly, but one is
\emph{nested} in the other. Specifically, we have the inductive
definition nested in the coinductive definition\footnote{i.e., we have
  a definition of the form $\nu X.\, G(\mu Y.\, F(X, Y), X)$ with both $F$
  and $G$ covariant in both arguments}, since we want finite
chunks of $\wkbisminoargs{\wkbismcnoargs}$ derivations to be weaved
into an infinite $\wkbismcnoargs$ derivation. The Agda developer
community is currently exploring a novel approach to coinductive types
(based on suspension types) \cite{DA:mixic,DA:subdem} where this form of mixing
induction and coinduction is easily encoded while nesting the other
way is problematic.

The two definitions of weak bisimilarity are equivalent. 

\begin{lemma}
For any $r$ and $r_*$, $\wkbism{r}{r_*}$ iff $\wkbismc{r}{r_*}$. 
\end{lemma}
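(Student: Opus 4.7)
The plan is to prove both implications by coinduction, with a nested induction used to handle the finite delay-collapsing inside each observation step.

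For the forward direction ($\wkbism{r}{r_*} \Rightarrow \wkbismc{r}{r_*}$), I proceed by coinduction on $\wkbismcnoargs$ and case-analyse the last rule used in $\wkbism{r}{r_*}$. If it is the delay rule, so $r = \delay{r'}$, $r_* = \delay{r_*'}$ with $\wkbism{r'}{r_*'}$, I close by the delay rule of $\wkbismcnoargs$ applied to the coinductive hypothesis. Otherwise, one of the first three rules was used, giving $\steps{r}{r^\dagger}$ and $\steps{r_*}{r_*^\dagger}$ with matching heads (both $\ret{\st}$; both inputs with pointwise weakly bisimilar continuations; or both outputs of the same value with weakly bisimilar residuals). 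I then apply the Mendler-style first rule of $\wkbismcnoargs$ instantiating the carrier $X$ to $\wkbismnoargs$; the side condition $X \subseteq \wkbismcnoargs$ is exactly the coinductive hypothesis, and the remaining premise $\wkbismi{X}{r}{r_*}$ comes from a short auxiliary lemma.

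The auxiliary lemma, useful in both directions, reads: if $\steps{r}{r^\dagger}$, $\steps{r_*}{r_*^\dagger}$ and $\wkbismi{X}{r^\dagger}{r_*^\dagger}$ is directly witnessed by one of the first three rules of $\wkbisminoargs{X}$ at matching heads, then $\wkbismi{X}{r}{r_*}$. It is proved by an easy induction on the two convergence derivations, iterating the last two delay-peeling rules of $\wkbisminoargs{X}$.

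For the reverse direction ($\wkbismc{r}{r_*} \Rightarrow \wkbism{r}{r_*}$), I again use coinduction on $\wkbismnoargs$ and invert $\wkbismc{r}{r_*}$. The delay rule case is closed by the delay rule of $\wkbismnoargs$ with a coinductive call. In the Mendler rule case we have some $X \subseteq \wkbismcnoargs$ and $\wkbismi{X}{r}{r_*}$; a direct induction on the derivation of $\wkbismi{X}{r}{r_*}$ produces $\steps{r}{r^\dagger}$ and $\steps{r_*}{r_*^\dagger}$ at matching heads, with continuations $X$-related. I apply the corresponding convergence-based rule of $\wkbismnoargs$ and discharge the continuation obligations by composing $X \subseteq \wkbismcnoargs$ with the coinductive hypothesis $\wkbismcnoargs \subseteq \wkbismnoargs$ — a guarded coinductive call, since we are past an observable action. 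The main obstacle is orchestrating the interaction between the Mendler-style coinductive rule and the inductive auxiliaries so that the side conditions match the coinductive hypotheses and Coq's syntactic guardedness check is satisfied; in particular, the delay-peeling rules of $\wkbisminoargs{X}$, absent from the first formulation of weak bisimilarity, are exactly what mediates between the two styles at each observation step.
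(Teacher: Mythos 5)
Your proof is correct and takes essentially the route the paper intends: the paper defers this lemma to its Coq development, but your argument — Mendler-style coinduction instantiating the carrier $X$ with $\wkbismnoargs$ and discharging the side condition $X \subseteq \wkbismcnoargs$ via the coinduction hypothesis in one direction, and extracting convergence data from $\wkbismi{X}{r}{r_*}$ by induction before a guarded corecursive call in the other — is exactly the technique the paper demonstrates (e.g., in its transitivity proof), with the delay-peeling rules mediating between the two styles. The only detail to make explicit is that the base case of your auxiliary lemma needs $X$ to respect strong bisimilarity (convergence fixes the residuals only up to $\approx$), which is covered by the paper's convention that $X$ ranges over setoid relations and holds for the instantiation $X = \wkbismnoargs$.
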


Weak bisimilarity is a setoid predicate and an equivalence relation.

\begin{lemma}
  Weak bisimilarity is a setoid predicate: For any $r$, $r'$, $r_*$,
  $r_*'$, if $\bism{r}{r'}$, $\wkbism{r}{r_*}$ and $\bism{r_*}{r_*'}$,
  then $\wkbism{r'}{r_*'}$. Weak bisimilarity is an equivalence.
\end{lemma}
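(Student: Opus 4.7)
The plan is to use the first formulation $\wkbismnoargs$ throughout, establishing setoid compatibility, reflexivity, symmetry, and transitivity in turn; all four proofs are coinductive, and all but transitivity are routine. The setoid property follows by coinduction on $\wkbism{r}{r_*}$, using the previously noted fact that convergence is a setoid predicate: if $\bism{r}{r'}$ and $\steps{r}{r_1}$, then $\steps{r'}{r_1'}$ for some $r_1'$ with $\bism{r_1}{r_1'}$. Combined with the evident congruence of strong bisimilarity with the head constructor (so $\bism{\outp{v}{r}}{r'}$ forces $r'$ to be $\outp{v}{r'_0}$ with $\bism{r}{r'_0}$, and similarly for $\ret{}$, $\inp{}$, $\delay{}$), each case of $\wkbism{r}{r_*}$ reapplies with transported convergence obligations, invoking the coinduction hypothesis on residuals.

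Reflexivity $\wkbism{r}{r}$ is by coinduction with a head case split: for $\ret{\st}$, $\inp{f}$, $\outp{v}{r'}$ the respective first three rules fire with the trivial one-step convergence; for $\delay{r'}$ the fourth rule applies with the coinductive hypothesis. Symmetry is an immediate coinduction since the four defining rules are symmetric in $r$ and $r_*$.

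Transitivity is the main obstacle. I would first prove an inversion lemma by induction on convergence: if $\wkbism{r}{r_*}$ and $\steps{r}{\ret{\st}}$ then $\steps{r_*}{\ret{\st}}$, and analogously if $\steps{r}{\inp{f}}$ then $\steps{r_*}{\inp{f_*}}$ with $\forall v.\, \wkbism{f~v}{f_*~v}$, and similarly for output. The base cases read off the conclusion from the three convergence rules of $\wkbismnoargs$; the delay step inverts $\wkbism{\delay{r_0}}{r_*}$: either $r_* = \delay{r_*'}$ and the induction hypothesis applies after one strip, or one of the first three rules already supplies a matching convergence of $r_*$. Given this lemma, transitivity of $\wkbism{r_1}{r_2}$ and $\wkbism{r_2}{r_3}$ is a coinduction: in the three convergence cases of $\wkbism{r_1}{r_2}$, the inversion lemma applied to $\wkbism{r_2}{r_3}$ with the convergence of $r_2$ yields matching convergence of $r_3$, and we pair residuals with the coinductive hypothesis; in the delay case both $r_1$ and $r_2$ start with a delay, and a further case split on $\wkbism{r_2}{r_3}$ either matches a delay on $r_3$ (coinductive step) or yields convergence of $r_2$ through the delay, reducing to the previous situation.

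The delicate part is satisfying Coq's syntactic guardedness check for this nested induction-coinduction, since corecursive calls naturally appear underneath the inductive unfolding of convergence. For this I would carry out the transitivity proof in the second formulation $\wkbismcnoargs$ and transport via the equivalence lemma: the Mendler-style rule lets me package the transitivity coinduction hypothesis as the abstract parameter relation $X$ (intuitively, the composition of $\wkbismcnoargs$ with itself) and weave it into $\wkbisminoargs{X}$ derivations without the corecursive calls appearing under an inductive constructor, which is precisely the pattern the Park-style presentation fails to accommodate.
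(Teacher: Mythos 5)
Your proposal is correct and, on the crux, follows essentially the same route as the paper: transitivity is carried out in the second formulation, with Mendler-style coinduction letting the coinduction hypothesis discharge the inclusion $\cmp{X}{Y} \subseteq \wkbismcnoargs$ of the composed parameter relation, and the remaining parts (setoid property, reflexivity, symmetry) are routine coinductions, as you say. The only piece left implicit in your second-formulation paragraph is the inductive lemma the paper states explicitly --- that $\wkbismi{X}{r_0}{r_1}$ and $\wkbismi{Y}{r_1}{r_2}$ yield $\wkbismi{(\cmp{X}{Y})}{r_0}{r_2}$ --- which is exactly the role your first-formulation inversion lemma was playing.
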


\begin{proof}
Reflexivity and symmetry are straightforward to prove by coinduction.
Below we sketch the proof for transitivity with the second formulation,
$\wkbismc{r}{r_*}$, to show Mendler-style coinduction working in our favour.
For binary relations $X, Y$, let
$\cmp{X}{Y}$ denote their composition; namely, 
$x \mathbin{(\cmp{X}{Y})} x'$ if there is $x''$ such that
$x \mathbin{X} x''$ and $x'' \mathbin{Y} x'$.
We first prove, by {\em induction}, the transitivity for 
$\wkbisminoargs{X}$, i.e., that,
for any resumptions $r_0, r_1, r_2$ and setoid 
relations $X, Y$, 
if $\wkbismi{X}{r_0}{r_1}$
and $\wkbismi{Y}{r_1}{r_2}$,
then $\wkbismi{(\cmp{X}{Y})}{r_0}{r_2}$.
The transitivity of $\wkbismcnoargs$ states that, 
for any resumptions $r_0, r_1$ and $r_2$,
if $\wkbismc{r_0}{r_1}$
and $\wkbismc{r_1}{r_2}$, then $\wkbismc{r_0}{r_2}$.
The proof of this is by \emph{coinduction} and 
inversion on  $\wkbismc{r_0}{r_1}$
and $\wkbismc{r_1}{r_2}$. We show the main case.
Suppose we have $\wkbismc{r_0}{r_1}$
and $\wkbismc{r_1}{r_2}$, because
$\wkbismi{X}{r_0}{r_1}$ and $\wkbismi{Y}{r_1}{r_2}$
for some $X$ and $Y$ such that $X \subseteq \wkbismcnoargs$
and $Y \subseteq \wkbismcnoargs$.
By the transitivity of $\wkbisminoargs{X}$ (which was proved by induction 
separately above), we obtain $\wkbismi{\cmp{X}{Y}}{r_0}{r_2}$.
Using the coinduction hypothesis, we have $\cmp{X}{Y} \subseteq \cmp{\wkbismcnoargs}{\wkbismcnoargs} \subseteq \wkbismcnoargs$,
which closes the case. 
Notably, the invocation of the coinduction hypothesis here is properly guarded thanks to our use of Mendler's trick.
\end{proof}

As one should expect, strongly bisimilar resumptions are weakly
bisimilar.
\begin{corollary}
For any $r$, $r_*$, $\bism{r}{r_*}$, then $\wkbism{r}{r_*}$.
\end{corollary}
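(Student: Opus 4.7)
The plan is to prove the corollary directly by coinduction using the first formulation $\wkbismnoargs$, which fits the structure of strong bisimilarity more closely. Given $\bism{r}{r_*}$, I would proceed by case analysis on the top constructor of the strong bisimilarity derivation (equivalently, on the head form of $r$, which by $\bism{r}{r_*}$ dictates the head form of $r_*$). The key observation that makes this proof straightforward is that convergence is trivially achievable on non-delay heads: for any head form $h$, one has $\steps{h}{h}$ by a single application of the corresponding $\steps{}{}$ rule, using reflexivity of $\bismnoargs$ for the side condition on inputs/outputs.

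For the four cases: when $r = \ret{\st}$ and $r_* = \ret{\st}$, the first rule of $\wkbismnoargs$ applies immediately with both convergences being reflexive. When $r = \inp{f}$ and $r_* = \inp{f_*}$ with $\forall v.\, \bism{f\,v}{f_*\,v}$, the second rule applies using $\steps{\inp{f}}{\inp{f}}$, $\steps{\inp{f_*}}{\inp{f_*}}$, and $\forall v.\, \wkbism{f\,v}{f_*\,v}$ obtained from the coinduction hypothesis. The output case is analogous. Finally, when $r = \delay{r'}$ and $r_* = \delay{r_*'}$ with $\bism{r'}{r_*'}$, the fourth rule of $\wkbismnoargs$ matches directly, with $\wkbism{r'}{r_*'}$ supplied by the coinduction hypothesis.

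I expect no serious obstacle here, since the two relations have parallel rule structures and each case produces one constructor of $\wkbismnoargs$ before invoking the hypothesis, so the corecursive call is syntactically guarded in Coq. The only subtlety worth flagging is that, although the definition of $\wkbismnoargs$ routes non-delay cases through the convergence predicate rather than matching on head forms, this does not cost anything because $\steps{}{}$ admits reflexive derivations on non-delay resumptions (modulo bisimilarity of subresumptions/subfunctions). Alternatively, one can obtain the result as a very short consequence of the previous lemma showing that weak bisimilarity is a setoid predicate, together with reflexivity of $\wkbismnoargs$: from $\bism{r}{r_*}$, reflexivity gives $\wkbism{r}{r}$, and the setoid property then upgrades this to $\wkbism{r}{r_*}$.
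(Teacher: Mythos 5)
Your proposal is correct, and in fact it contains two proofs. Your primary argument is a direct coinduction: case analysis on the head of the strong bisimilarity derivation, using that convergence is reflexively derivable at non-delay heads ($\steps{\ret{\st}}{\ret{\st}}$, $\steps{\inp{f}}{\inp{f}}$, $\steps{\outp{v}{r'}}{\outp{v}{r'}}$, the latter two via reflexivity of $\approx$ for the side conditions), with each case applying exactly one rule of $\wkbismnoargs$ before the corecursive call, so the proof is productive/guarded; this works and needs no further lemmas beyond reflexivity of strong bisimilarity. The paper instead disposes of the corollary in one line: it is immediate from the preceding lemma that $\wkbismnoargs$ is a reflexive setoid predicate (from $\wkbism{r}{r}$ and $\bism{r}{r_*}$ one upgrades to $\wkbism{r}{r_*}$) --- which is precisely the alternative you sketch in your last sentence. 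The trade-off is minor: the paper's route is essentially free given the already-proved setoid/reflexivity lemma (whose hard part, transitivity via Mendler-style coinduction, is needed anyway), whereas your direct coinduction is self-contained and makes the guardedness argument explicit, at the cost of redoing by hand a small piece of what the setoid lemma already packages.
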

\begin{proof}
  Immediate from $\wkbismnoargs$ being a reflexive setoid predicate.
\end{proof}

Termination-sensitive bisimilarity has previously been considered by
Ku\v{c}era and Mayr \cite{KM:weabfs} and Bohannon et al.~\cite{BPSWZ:rean} 
(but see also Bergstra et al.~\cite{BKO:tswkbism}). 
Their version is best suited for classical
reasoning in the sense that terminating and silently diverging
resumptions are distinguished by an upfront choice between convergence
and divergence. This version of weak bisimilarity, denoted
$\cwkbism{r}{r_*}$, is defined coinductively by
\[
\small
\begin{array}{c}
\infer={\cwkbism{r}{r_*}}{
  \steps{r}{\ret{\st}}
  &\steps{r_*}{\ret{\st}}
}
\quad
\infer={\cwkbism{r}{r_*}}{
  \steps{r}{\outp{v}{r'}}
  &\steps{r_*}{\outp{v}{r'_*}}
  &\cwkbism{r'}{r'_*}
}
\quad %\\[1ex]
\infer={\cwkbism{r}{r_*}}{
  \steps{r}{\inp{f}}
  &\steps{r_*}{\inp{f_*}}
  &\forall v.\, \cwkbism{f~v}{f_*~v}
}
\quad
\infer={\cwkbism{r}{r_*}}{
  \uresp{r}
  &\uresp{r_*}
}
\end{array}
\]
Only the fourth rule is different from the rules of $\wkbismnoargs$
and refers directly to divergence.

The classical-style version of weak bisimilarity, $\cwkbismnoargs$, is
stronger than the constructive-style version, $\wkbismnoargs$. The
converse is only true classically.

\begin{lemma}
For any $r$ and $r_*$, if $\cwkbism{r}{r_*}$, then  $\wkbism{r}{r_*}$.
Classically, for any $r$ and $r_*$, if $\wkbism{r}{r_*}$, then  $\cwkbism{r}{r_*}$.
\end{lemma}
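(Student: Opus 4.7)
The plan is to handle the two implications separately, each by coinduction on the target relation; the constructive direction is essentially a routine correspondence of rules, while the classical direction requires a preliminary matching lemma and two invocations of decidability of convergence.

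For the constructive direction, $\cwkbism{r}{r_*} \Rightarrow \wkbism{r}{r_*}$, I would proceed by coinduction on $\wkbismnoargs$ and invert the hypothesis $\cwkbism{r}{r_*}$. Rules 1--3 of $\cwkbismnoargs$ mirror rules 1--3 of $\wkbismnoargs$ exactly (both phrase their premises via $\stepsnoargs$), so those cases close immediately, with the coinduction hypothesis discharging the residuals in the input and output cases. For the divergence rule, from $\uresp{r}$ and $\uresp{r_*}$ I would invert each (the divergence predicate has a single $\delta$-rule) to obtain $r = \delay{r'}$, $r_* = \delay{r_*'}$, $\uresp{r'}$ and $\uresp{r_*'}$. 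Rule 4 of $\wkbismnoargs$ then closes the obligation, with the coinduction hypothesis applied to $r'$ and $r_*'$, which are related by $\cwkbismnoargs$ via its divergence rule.

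For the classical direction, I would first prove a matching lemma: if $\wkbism{r}{r_*}$ and $\steps{r}{r'}$, then there exists $r_*'$ with $\steps{r_*}{r_*'}$ such that $r'$ and $r_*'$ share their top constructor, with weakly bisimilar residuals for the input and output cases and equal states for the terminating case. The proof is by induction on $\steps{r}{r'}$. In the base cases ($r$ already has the form $\ret{\st}$, $\inp{f}$, or $\outp{v}{r''}$), rule 4 of $\wkbismnoargs$ is ruled out by shape, so inversion of $\wkbism{r}{r_*}$ selects one of rules 1--3 and supplies the witness directly. In the inductive case $r = \delay{r''}$, inversion of $\wkbism{\delay{r''}}{r_*}$ branches: under rule 4 one strips $\delta$s on both sides and invokes the induction hypothesis on $r''$ and $r_*''$; under rules 1--3, the $\stepsnoargs$ premise on $r = \delay{r''}$ decomposes via the $\delta$-rule of convergence to expose the head of $r''$, and the $\stepsnoargs$ premise on $r_*$ supplies $r_*'$ directly.

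Armed with the matching lemma and the symmetry of $\wkbismnoargs$ already established, I would prove the classical direction by coinduction on $\cwkbismnoargs$. Given $\wkbism{r}{r_*}$, classical dichotomy on convergence of $r$ gives two cases. If $\steps{r}{r'}$, the matching lemma supplies a matching $r_*'$, and one of the first three rules of $\cwkbismnoargs$ closes the obligation, with the coinduction hypothesis handling input/output residuals. If $\uresp{r}$, the earlier lemma forbidding weak bisimilarity between convergent and divergent resumptions, combined with symmetry of $\wkbismnoargs$, excludes $r_*$ converging; classical dichotomy on $r_*$ then forces $\uresp{r_*}$, and the divergence rule of $\cwkbismnoargs$ applies. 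The main obstacle I anticipate is the matching lemma, particularly the $\delta$-case, where one must carefully align the inductive analysis of the convergence witness with every possible inversion branch of the coinductive $\wkbismnoargs$ hypothesis; once that lemma is in place, classical reasoning is cleanly localized to the two decidability splits inside the final coinduction.
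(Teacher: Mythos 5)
Your proposal is correct and takes the expected route: the paper states this lemma without an explicit proof (deferring to the Coq development), and your argument --- rule-by-rule coinduction for the constructive direction, and for the classical direction a coinduction on $\cwkbismnoargs$ using the classical convergence/divergence dichotomy, a matching lemma, and the earlier lemma that a convergent and a silently diverging resumption are never weakly bisimilar --- is precisely the style the paper itself uses, e.g., in its classical proof of reflexivity of $\cwkbismnoargs$. The only detail you gloss over is in the delay case of your matching lemma: when inversion of $\wkbism{\delay{r''}}{r_*}$ yields one of the first three rules, you must reconcile the convergence target it supplies for $r$ with the given $r'$, which needs determinism of convergence up to strong bisimilarity (a routine induction on the convergence derivation), together with the usual setoid reasoning in the input case.
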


We insist on the use constructive-style weak bisimilarity,
$\wkbismnoargs$, in the constructive setting, because the
classical-style notion fails to enjoy some fundamental properties
constructively.

\begin{lemma}
  Classical-style weak bisimilarity is a setoid predicate.
  Classically, it is also an equivalence weaker than strong
  bisimilarity.
\end{lemma}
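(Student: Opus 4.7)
The plan is to prove the three parts separately: the setoid property (constructive), then classical reflexivity, symmetry, transitivity, and finally that strong bisimilarity implies classical-style weak bisimilarity.

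For the setoid property, I would use coinduction on $\cwkbismnoargs$. Given $\bism{r}{r'}$, $\cwkbism{r}{r_*}$ and $\bism{r_*}{r_*'}$, I invert the middle hypothesis into its four cases. In the three convergence cases, I transport $\steps{r}{\cdot}$ to $\steps{r'}{\cdot}$ and $\steps{r_*}{\cdot}$ to $\steps{r_*'}{\cdot}$ using that convergence is a setoid predicate (from the earlier lemma on convergence/divergence/responsiveness/commitedness), then apply the corresponding rule, with the coinduction hypothesis discharging the residuals (strong bisimilarity of residuals follows from inverting $\bism{}{}$). In the divergence case, I use that divergence is a setoid predicate to transfer $\uresp{r}$ and $\uresp{r_*}$ along $\bism{r}{r'}$ and $\bism{r_*}{r_*'}$, and close with the fourth rule.

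For classical equivalence, reflexivity follows by coinduction using the classical dichotomy $\forall r.\, (\exists r'.\, \steps{r}{r'}) \vee \uresp{r}$ mentioned in Section~2: in the convergent branch, I case-split on the shape of $r'$ (a $\ret$, $\inp$, or $\outp$) and apply the matching rule, invoking the coinductive hypothesis for residual subresumptions; in the divergent branch, I apply the fourth rule with both sides being $r$. Symmetry is straightforward: coinductively invert, then reapply each of the four rules with arguments swapped. Transitivity is the main obstacle. I would proceed by coinduction, inverting both $\cwkbism{r_0}{r_1}$ and $\cwkbism{r_1}{r_2}$ to get a $4 \times 4$ case split on the middle resumption $r_1$. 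The three matched convergence cases (same constructor on both sides, which is forced because convergence of $r_1$ to observable actions has a uniquely determined head-action shape once we pin down the rule used) proceed by the corresponding rule with the coinduction hypothesis on the residuals; the matched divergence case uses the fourth rule directly. The cross cases, where $r_1$ is witnessed to converge by one hypothesis and to diverge by the other, are ruled out using mutual exclusion of convergence and divergence (a routine induction on $\steps{r_1}{r_1'}$ shows $\neg \uresp{r_1}$ when $r_1'$ is an observable action). A subtle point is the convergence-convergence cases where the two derivations pin down different head actions for $r_1$; here I would need a uniqueness lemma that $\steps{r}{\cdot}$ determines the head-action shape up to strong bisimilarity, provable by induction on the two convergence derivations.

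For the final claim, that strong bisimilarity implies $\cwkbismnoargs$, I would proceed by coinduction on $\cwkbismnoargs$. Given $\bism{r}{r_*}$, apply the classical dichotomy to $r$: in the convergent subcase $\steps{r}{r'}$, induct on the derivation to transport convergence along $\bism{r}{r_*}$, obtaining $\steps{r_*}{r_*'}$ with $\bism{r'}{r_*'}$; then case-split on the shape of $r'$ and apply the relevant rule, using the coinductive hypothesis on residuals. In the divergent subcase $\uresp{r}$, use that divergence is a setoid predicate to obtain $\uresp{r_*}$ and close by the fourth rule. The main obstacle throughout is transitivity, where the case analysis is largest and the classical dichotomy must be combined carefully with the mutual exclusion of convergence and divergence to discharge the mixed cases.
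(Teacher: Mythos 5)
Your proposal is correct and follows essentially the same route as the paper: classical reflexivity by coinduction using the dichotomy between convergence and divergence, constructive symmetry and transitivity by coinduction with inversion on the hypotheses, and the setoid property (needed, together with determinism of convergence up to strong bisimilarity and the exclusion of convergence and divergence, to discharge the mixed and mismatched cases in transitivity). You merely spell out details the paper's sketch leaves implicit, and you prove the "weaker than strong bisimilarity" claim by a direct coinduction where the paper would obtain it immediately from classical reflexivity plus the setoid property — an inessential difference.
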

\begin{proof}
We only prove that $\cwkbismnoargs$ is an equivalence. 
Reflexivity: We prove that for any $r$, $\cwkbism{r}{r}$ by 
coinduction. Classically, we have $\forall r_0.\, (\exists r_0'.\,
\steps{r_0}{r_0'}) \vee \uresp{r_0}$.  
Should $\uresp{r}$ hold, we immediately conclude $\cwkbism{r}{r}$.
Suppose there exists $r'$ such that $\steps{r}{r'}$. 
Moreover suppose $r' = \inp{f}$ for some $f$.  
The coinduction hypothesis 
gives us that for any $v$, $\cwkbism{f~v}{f~v}$, from which
$\cwkbism{r}{r}$ follows. The other cases, i.e., when
$r' = \outp{v}{r''}$ for some $v$ and $r''$ or 
$r' = \ret{\st}$ for some $\st$, are similar.
Symmetry:
We prove constructively that for any $r$ and $r'$, if $\cwkbism{r}{r'}$ 
then $\cwkbism{r'}{r}$ by coinduction and inversion on 
$\cwkbism{r}{r'}$. 
Transitivity: 
We prove constructively that for any $r, r'$ and $r''$, if $\cwkbism{r}{r'}$ 
and $\cwkbism{r'}{r''}$ 
then $\cwkbism{r}{r''}$ by coinduction and inversion on 
$\cwkbism{r}{r'}$ and $\cwkbism{r'}{r''}$.
\end{proof}

Constructively, it is not possible to show classical-style weak
bisimilarity reflexive 
and hence we cannot show any two strong
bisimilar resumptions classical-style weakly bisimilar.

A simple example of a resumption $r$ not classical-style weakly
bisimilar to itself constructively is given by any search process that
is classically total, but cannot be proved terminating constructively,
since no bound on the search can be given. By definition, a resumption
can only be classical-style weakly bisimilar to another if it
terminates or diverges. Constructively, the resumption $r$ is only
nondiverging, we cannot show it terminating.

% ========================================================================
% ========================================================================
% ========================================================================

\section{Big-Step Semantics}
\label{sec:exec}

We now proceed to a first, basic (delayful) big-step operational
semantics for our reactive While in terms of delayful resumptions.
Evaluation $\exec{s}{\st}{r}$, expressing that running a statement $s$
from a state $\st$ produces a resumption $r$, is defined coinductively
by the rules in Figure~\ref{fig:exec}.  The rules for sequence and
while implement the necessary sequencing with the help of extended
evaluation $\execseq{s}{r}{r'}$, expressing that running a statement
$s$ from the last state (if it exists) of an already accumulated
resumption $r$ results in a total resumption $r'$.  Extended
evaluation is also defined coinductively, as the coinductive prefix
closure of evaluation.

Input and output statements produce corresponding resumptions that
perform input or output actions and terminate thereafter.  We consider
assignments and testing of guards of if- and while-statements to
constitute internal actions, observable as delays.  This way we avoid
introducing semantic anomalies, by making sure that any while-loop
always progresses.  But this choice also ensures that evaluation is
total---as we should expect. Given that it is deterministic as
well\footnote{Note that the external nondeterminism resulting from
  input actions is encapsulated in resumptions.}, we can equivalently
turn our relational big-step semantics into a functional one: the
unique resumption for a given configuration (statement-state pair) is 
definable by corecursion.\footnote{This aspect makes our big-step operational 
semantics very close in spirit to denotational semantics, specifically, 
denotational semantics in terms of Kleisli categories, here, the Kleisli 
category of a resumptions monad.}  
This semantics is a straightforward adaptation of the
trace-based coinductive big-step semantics of non-interactive While
from our previous work~\cite{NU:trabco}, where the details can be
found and where we motivate all our design choices (e.g., why $\Skip$
takes no time whereas the boolean guards do; we argue that our design 
is canonical).

\begin{figure}[t]
\[
\small
\begin{array}{c}
\quad
\infer={\exec{\Assign{x}{e}}{\st}{\delay{(\ret{\update{\st}{x}{\eval{e}{\st}}})}}
}{}
\qquad
\infer={\exec{\Skip}{\st}{\ret{\st}}
}{}
\qquad
\infer={
  \exec{\Seq{s_0}{s_1}}{\st}{r'}
}{
  \exec{s_0}{\st}{r}
  &\execseq{s_1}{r}{r'}
}
\\[1ex]
\infer={
  \exec{\Ifthenelse{e}{s_t}{s_f}}{\st}{r}
}{
  \istrue{\st}{e}
  &\execseq{s_t}{\delay{(\ret{\st})}}{r}
}
\quad
\infer={
  \exec{\Ifthenelse{e}{s_t}{s_f}}{\st}{r}
}{
  \isfalse{\st}{e}
  &\execseq{s_f}{\delay{(\ret{\st})}}{r}
}
\\[1ex]
\infer={
  \exec{\While{e}{s_t}}{\st}{r'}
}{
  \istrue{\st}{e}
  &\execseq{s_t}{\delta{(\ret{\st})}}{r}
  &\execseq{\While{e}{s_t}}{r}{r'}
}
\quad
\infer={
  \exec{\While{e}{s_t}}{\st}{\delay{(\ret{\st})}}
}{
  \isfalse{\st}{e}
}
\\[1ex]
\infer={
  \exec{\Input{x}}{\st}{\inp{(\lambda v.\ret{\update{\st}{x}{v}})}}
}{}
\qquad
\infer={
  \exec{\Output{e}}{\st}{\outp{(\eval{e}{\st})}{(\ret{\st})}}
}{}
\\[3ex]
\infer={
  \execseq{s}{\ret{\st}}{r}
}{
  \exec{s}{\st}{r}
}
\qquad
\infer={
  \execseq{s}{\inp{f}}{\inp{f'}}
}{
  \forall v.\, \execseq{s}{f~v}{f'~v}
}
\qquad
\infer={
  \execseq{s}{\outp{v}{r}}{\outp{v}{r'}}
}{
  \execseq{s}{r}{r'}
}
\qquad
\infer={
  \execseq{s}{\delay{r}}{\delay{r'}}
}{
  \execseq{s}{r}{r'}
}
\end{array}
\]
\caption{Basic (delayful) big-step semantics
}
\label{fig:exec}
\end{figure}

\begin{lemma}
  Evaluation is a setoid predicate. It is total and deterministic up
  to bisimilarity.
\end{lemma}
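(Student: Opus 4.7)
The lemma bundles three assertions about the mutually coinductive relations $\exec{s}{\st}{r}$ and $\execseq{s}{r_0}{r}$: they are setoid predicates, total, and deterministic up to bisimilarity. I would prove each assertion simultaneously for the two relations, using coinduction throughout and, for totality, mutual corecursion.

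For setoid-ness, the plan is to prove by coinduction the conjunction of (i) if $\exec{s}{\st}{r}$ and $\bism{r}{r_*}$ then $\exec{s}{\st}{r_*}$, and (ii) the analogue for extended evaluation in both its input and output resumptions. For each rule of Figure~\ref{fig:exec}, I would invert the given derivation to fix the head constructor of the output; $\bism{r}{r_*}$ then forces the same head on $r_*$, and the coinductive hypothesis discharges the residuals. Rules with specific outputs such as $\Input{x}$, $\Output{e}$, $\Skip$ and $\Assign{x}{e}$ need a small extra step to substitute a pointwise-bisimilar inner function or resumption, which is legitimate because of the setoid treatment of the function space sitting under $\mathit{in}$; the compound rules for $\Seq{s_0}{s_1}$, $\mathsf{if}$ and $\mathsf{while}$ delegate to (ii).

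For totality, I would define $\mathit{eval} : \mathit{stmt} \to \mathit{state} \to \mathit{res}$ and its continuation companion $\mathit{eval}' : \mathit{stmt} \to \mathit{res} \to \mathit{res}$ by mutual corecursion, one clause per rule of Figure~\ref{fig:exec}. The critical productivity check arises for $\While{e}{s_t}$ when $\istrue{\st}{e}$: the body is fed to $\mathit{eval}'$ on $\delay{(\ret{\st})}$, and the delay clause of $\mathit{eval}'$ emits a leading $\delta$ constructor before any further self-call, so Coq accepts the recursive invocation of $\mathit{eval}$ on the same $\While{e}{s_t}$ as syntactically guarded. Having the functions, I would verify $\exec{s}{\st}{\mathit{eval}\,s\,\st}$ coinductively, clause by clause. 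Determinism is then shown by coinduction on bisimilarity, again simultaneously for both relations, case-splitting on $s$ (or on the input resumption for extended evaluation) and inverting both derivations; because the guard values $\istrue{\st}{e}$ and $\isfalse{\st}{e}$ agree between the two derivations, the branching cases for if and while align and the coinductive hypothesis discharges each residual.

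The main obstacle I anticipate is the productivity check in the totality proof for while loops: it works only because the guard test is charged a delay step, so that the corecursive call to $\mathit{eval}$ on the same loop appears under a $\delta$ constructor. Without that convention, Coq's guardedness condition would reject the naive definition and totality would have to be rescued by a non-syntactic productivity argument based on a semantic measure.
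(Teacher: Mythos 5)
Your decomposition (setoid-ness and determinism proved coinductively and simultaneously for $\exec{s}{\st}{r}$ and $\execseq{s}{r}{r'}$, totality via a corecursively defined evaluation function) is the intended one, and you correctly identify the paper's design point that charging assignments and guard tests a $\delta$ is what makes evaluation total at all. But the totality argument as you state it has a genuine flaw: the ``one clause per rule'' mutual corecursive definition of $\mathit{eval}$ and $\mathit{eval}'$ is \emph{not} syntactically guarded, and the while clause is not even the first place it fails. In the clause $\mathit{eval}\,(\Seq{s_0}{s_1})\,\st = \mathit{eval}'\,s_1\,(\mathit{eval}\,s_0\,\st)$ the outer corecursive call sits at the head of the right-hand side, under no constructor of $\res$, and the inner call occurs as an \emph{argument} of the other corecursive function; the clause $\mathit{eval}'\,s\,(\ret{\st}) = \mathit{eval}\,s\,\st$ has the same shape. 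Coq's guardedness checker requires every corecursive call to appear directly under constructors and never inside the argument of a non-constructor application; it does not unfold $\mathit{eval}'$ to discover that a $\delta$ will eventually be emitted. So the definition is rejected outright, independently of the delay convention. The delay on the loop guard is what makes the function \emph{semantically} productive (without it $\While{\true}{\Skip}$ would have no constructors at all), but it cannot rescue the syntactic check.

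To actually produce the witness for totality you must restructure the construction: define the analogue of $\execseqnoargs$ as a separately defined, guarded corecursive sequencing (Kleisli-extension) operator on resumptions, define $\mathit{eval}$ by \emph{structural recursion on the statement}, and for $\While{e}{s_t}$ fuse the sequencing with the loop into a single corecursive function on resumptions whose ``$\mathit{ret}$'' branch restarts the body under a leading $\delta$ — so that every corecursive call is literally under a constructor — or else use a Mendler-style/parameterized workaround of the kind the paper employs for its nested definitions. A similar guardedness subtlety lurks in your determinism sketch: in the sequence case you need $\bism{r_1}{r_1'}$ for the intermediate resumptions from the coinduction hypothesis and must then feed it into the $\execseqnoargs$-determinism claim before emitting any constructor of the bisimilarity proof; as stated this is an unguarded use of the coinduction hypothesis and needs an explicit bisimulation candidate (or up-to/Mendler-style formulation) to go through. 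The setoid-ness part of your proposal is fine as sketched.
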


Let us look at some examples. 
We have $\exec{\While{\true}{\Skip}}{\st}{\bot}$ for any $\st$.
I.e., $\While{\true}{\Skip}$ silently diverges. 
We also have 
$\exec{\Input{x};\While{\true}{(\Output{x};\Assign{x}{x+1})}}{\st}
{\inp{(\lambda n.~ \up~n)}}$
where $\up$ is defined corecursively by 
$\up~n = \delay{(\outp{n}{(\delay{(\up~(n+1))})})}$.
I.e., the statement counts up from the given input $n$.
The two delays around every output action 
account for the internal actions of 
the assignment and testing of the boolean guard.
An interactive adder takes two inputs 
and outputs their sum, and repeats this process, that is,
we have 
$\exec{\While{\true}{(\Input{x};\Input{y};\Output{(x+y)})}}{\st}{\summ}$
where $\summ$ is defined cocursively by
$\summ = \delay{(\inp{(\lambda m.~\inp{(\lambda n.~\outp{(m+n)}{\summ})})})}$.

Weak bisimilarity is useful for reasoning about soundness of program 
transformations, where we accept that 
transformations may change the timing of a resumption.
For instance,
we have $\exec{\While{\true}{(\Assign{z}{x};\Output{z})}}{\st}
{\add~(\st~x)}$, where $\add$ is defined corecursively by 
$\add~n = \delay{\delay{(\outp{n}{(\add~n)})}}$, 
and
$\exec{\Assign{z}{x};\While{\true}{\Output{z}}}{\st}
{\delay{(\addopt~(\st~x))}}$, 
where $\addopt$ is defined corecursively by 
$\addopt~n = \delay{(\outp{n}{(\addopt~n)})}$,
with $\wkbism{\add~n}{\addopt~n}$.
The latter resumption is faster than the former, but 
they are weakly bisimilar.
In fact, we can prove $\While{e}{(\Assign{z}{x};s)}$
and $\Assign{z}{x};\While{e}{s}$ to be weakly bisimilar whenever $e$ 
is true of the initial state and $s$ does not change $x$. 
Here, the latter statement is obtained from the former by 
loop-invariant code motion, a well-known compiler optimization;
the optimization preserves the observable behaviour of the source statement,
irrespective of its termination behaviour, which it must respect as well.
We note that
$\Output{1}$ is not observationally 
equivalent to $(\While{\true}{\Skip});\Output{1}$. 
More importantly, $\Output{1}$ is not observationally equivalent to 
$\Output{1};(\While{\true}{\Skip})$, since our weak bisimilarity
is termination-sensitive. 
Of course, we can deal with more interesting program equivalences,
such as the equivalence of
$\SmultL  = \While{\true}{(\Input{x};\Input{y};\Assign{z}{0};
\While{x\neq 0}{(\Assign{z}{z+y};\Assign{x}{x-1})};\Output{z})}$
and 
$\SmultH = \While{\true}{(\Input{x};\Input{y};\Ifthenelse{x\geq 0}{\Output{x*y}}
{(\While{\true}{\Skip})})}$,
slow and fast interactive multipliers, which silently
diverge when given a negative first operand.

% ========================================================================
% ========================================================================
% ========================================================================

\section{Small-Step Semantics}

In this section, we introduce an equivalent small-step semantics and
define weak bisimilarity of configurations (statement-state pairs) in
terms of it.  We then prove two configurations to be weakly bisimilar
if and only if their evaluations produce weakly bisimilar resumptions.

A configuration $\cnf{s}{\st}$ is a pair of a statement and state.
\emph{Labelled configurations} $c : \Res$ are defined by the
sum\footnote{The definition is non-recursive, but we pretend that it
  is inductive, as we also do in Coq.}:
\[
\small
\begin{array}{c}
\infer{\Ret{\st} : \Res}{
  s : \mathit{state}}
\quad
\infer{
  \Inp{s}{g}  : \Res
}{
  s : \mathit{stmt}
  & 
  g : \Int \rar\ \mathit{state}
}
\quad
\infer{
  \Outp{v}{s}{\st} : \Res
}{
  v : \Int 
  & 
  s : \mathit{stmt}
  & 
  \st : \mathit{state}
}
\quad %\\[1ex]
\infer{
  \Delay{s}{\st} : \Res
}{
  s : \mathit{stmt}
  & \st : \mathit{state}
}
\end{array}
\]

A terminality predicate/one-step reduction relation $\stepnoargs$ is
defined in Figure~\ref{fig:smallstep} (top half). If $c= \Ret{\st}$,
then the proposition $\step{s}{\st}{c}$ means that the configuration
$(s, \st)$ has terminated at state $\st$. In other cases, it
corresponds to a labelled transition: if $c= \Inp{s'}{g}$, we take an
input $v$ and evolve to a configuration $(s', g~v)$; if $c =
\Outp{v}{s'}{\st'}$, we output $v$ and evolve to $\cnf{s'}{\st'}$; if
$c = \Delay{s'}{\st'}$, the configuration $\cnf{s}{\st}$ evolves to a
configuration $\cnf{s'}{\st'}$ in a delay step.  We have chosen to
label configurations rather than transitions so that labelled
configurations become ``trunks'' of resumptions.

\begin{figure}[t]
\[
\small
\begin{array}{c}
\infer{
  \step{\Assign{x}{e}}{\st}
       {\Delay{\Skip}{(\update{\st}{x}{\eval{e}{\st}})}}
}{}
\qquad 
\infer{\step{\Skip}{\st}{\Ret{\st}}}{}
\\[1ex]
\infer{
  \step{\Seq{s_0}{s_1}}{\st}{c}
}{
  \step{s_0}{\st}{\Ret{\st'}}
  &\step{s_1}{\st'}{c}
}
\quad
\infer{
  \step{\Seq{s_0}{s_1}}{\st}{\Inp{(\Seq{s'_0}{s_1})}{f}}
}{
  \step{s_0}{\st}{\Inp{s'_0}{f}}
}
\quad
\infer{
  \step{\Seq{s_0}{s_1}}{\st}
       {\Outp{v}{(\Seq{s'_0}{s_1})}{\st'}}
}{
  \step{s_0}{\st}{\Outp{v}{s'_0}{\st'}}
}
\quad
\infer{
  \step{\Seq{s_0}{s_1}}{\st}{\Delay{(\Seq{s'_0}{s_1})}{\st'}}
}{
  \step{s_0}{\st}{\Delay{s'_0}{\st'}}
}
\\[1ex]
\infer{
  \step{\Ifthenelse{e}{s_t}{s_f}}{\st}
       {\Delay{s_t}{\st}}
}{
  \istrue{e}{\st}
}
\quad
\infer{
  \step{\Ifthenelse{e}{s_t}{s_f}}{\st}
       {\Delay{s_f}{\st}}
}{
  \isfalse{e}{\st}
}
\\[1ex]
\infer{
  \step{\While{e}{s_t}}{\st}
       {\Delay{(\Seq{s_t}{\While{e}{s_t}})}{\st}}
}{
  \istrue{e}{\st}
}
\quad
\infer{
  \step{\While{e}{s_t}}{\st}
       {\Delay{\Skip}{\st}}
}{
  \isfalse{e}{\st}
}
\\[1ex]
\infer{
  \step{\Input{x}}{\st}{\Inp{\Skip}{(\lambda v.\update{\st}{x}{v})}}}{
}
\qquad
\infer{
  \step{\Output{e}}{\st}{\Outp{(\eval{e}{\st})}{\Skip}{\st}}}{
}
\\[3ex]
% =======================================================================
\infer={
  \redm{s}{\st}{\ret{\st'}}
}{
  \step{s}{\st}{\Ret{\st'}}
}
\qquad
\infer={
  \redm{s}{\st}{\delay{r}}
}{
  \step{s}{\st}{\Delay{s'}{\st'}}
  &\redm{s'}{\st'}{r}
}
\\[1ex]
\infer={
  \redm{s}{\st}{\inp{f}}
}{
  \step{s}{\st}{\Inp{s'}{g}}
  &\forall v.\, \redm{s'}{g~v}{f~v}
}
\qquad
\infer={
  \redm{s}{\st}{\outp{v}{r}}
}{
  \step{s}{\st}{\Outp{v}{s'}{\st'}}
  &\redm{s'}{\st'}{r}
}
\end{array}
\]
\caption{Small-step semantics}
\label{fig:smallstep}
\end{figure}

Weak bisimilarity of two configurations is defined 
in terms of terminality/one-step reduction. 
Again, convergence, $\steps{\cnf{s}{\st}}{c}$, states that
either $\cnf{s}{\st}$ terminates or performs an observable action 
in a finite number of steps.
It is defined inductively by
\[
\small
\begin{array}{c}
\infer{
  \steps{\cnf{s}{\st}}{\Ret{\st'}}
}{
  \step{s}{\st}{\Ret{\st'}}
}
\quad
\infer{
  \steps{\cnf{s}{\st}}{\Outp{v}{s'}{\st'}}
}{
  \step{s}{\st}{\Outp{v}{s'}{\st'}}
}
\quad
\infer{
  \steps{\cnf{s}{\st}}{\Inp{s'}{g}}
}{
  \step{s}{\st}{\Inp{s'}{g}}
}
\quad
\infer{
  \steps{\cnf{s}{\st}}{c}
}{
  \step{s}{\st}{\Delay{s'}{\st'}}
  &\steps{\cnf{s'}{\st'}}{c}
}
\end{array}
\]
(We overload the same notations for resumptions and configurations
without ambiguity.) Weak bisimilarity on 
configurations is defined coinductively by
\[
\small
\begin{array}{c}
\infer={
  \wkbism{\cnf{s}{\st}}{\cnf{s_*}{\st_*}}
}{
  \steps{\cnf{s}{\st}}{\Ret{\st'}}
  &\steps{\cnf{s_*}{\st_*}}{\Ret{\st'}}
}
\\[1ex]
\infer={
  \wkbism{\cnf{s}{\st}}{\cnf{s_*}{\st_*}}
}{
  \steps{\cnf{s}{\st}}{\Inp{s'}{g}} 
  &\steps{\cnf{s_*}{\st_*}}{\Inp{s'_*}{g_*}}
  &\forall v.\, \wkbism{\cnf{s'}{g~v}}{\cnf{s'_*}{g_*~v}}
}
\\[1ex]
\infer={
  \wkbism{\cnf{s}{\st}}{\cnf{s_*}{\st_*}}
}{
  \steps{\cnf{s}{\st}}{\Outp{v}{\cnf{s'}{\st'}}}
  &\steps{\cnf{s_*}{\st_*}}{\Outp{v}{\cnf{s'_*}{\st'_*}}}
  &\wkbism{\cnf{s'}{\st'}}{\cnf{s'_*}{\st'_*}}
}
\\[1ex]
\infer={
  \wkbism{\cnf{s}{\st}}{\cnf{s_*}{\st_*}}
}{
  \step{s}{\st}{\Delay{s'}{\st'}}
  &\step{s_*}{\st_*}{\Delay{s'_*}{\st'_*}}
  &\wkbism{\cnf{s'}{\st'}}{\cnf{s'_*}{\st'_*}}
}
\end{array}
\]

Two configurations are weakly bisimilar if and only if
their evaluations yield weakly bisimilar resumptions. 

\begin{lemma}
For any $s, s_*, \st$ and $\st_*$, 
$\Wkbism{\cnf{s}{\st}}{\cnf{s_*}{\st_*}}$ iff
there exist $r$ and $r_*$ such that 
$\exec{s}{\st}{r}$ and $\exec{s_*}{\st_*}{r_*}$ and $\wkbism{r}{r_*}$.
\end{lemma}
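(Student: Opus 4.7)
The plan is to prove both directions of the iff by coinduction on weak bisimilarity, reducing everything to a structural correspondence between one small-step transition out of a configuration and the outermost constructor of the resumption produced by big-step evaluation. Since evaluation is total and deterministic up to strong bisimilarity, picking $r$ and $r_*$ with $\exec{s}{\st}{r}$ and $\exec{s_*}{\st_*}{r_*}$ is harmless, and the main claim reduces to showing that for such $r$, $r_*$ the relations $\wkbism{r}{r_*}$ and $\Wkbism{\cnf{s}{\st}}{\cnf{s_*}{\st_*}}$ are equivalent.

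The key auxiliary lemma I would prove first states, for any $s, \st$ and $r$ with $\exec{s}{\st}{r}$, that a one-step transition out of $\cnf{s}{\st}$ and the outermost constructor of $r$ determine each other: $\step{s}{\st}{\Ret{\st'}}$ iff $\bism{r}{\ret{\st'}}$; $\step{s}{\st}{\Delay{s'}{\st'}}$ iff $\bism{r}{\delay{r'}}$ for some $r'$ with $\exec{s'}{\st'}{r'}$; $\step{s}{\st}{\Inp{s'}{g}}$ iff $\bism{r}{\inp{f}}$ for some $f$ with $\exec{s'}{g~v}{f~v}$ for all $v$; and $\step{s}{\st}{\Outp{v}{s'}{\st'}}$ iff $\bism{r}{\outp{v}{r'}}$ for some $r'$ with $\exec{s'}{\st'}{r'}$. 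Both directions are proved by induction on $s$ followed by inversion on the evaluation and small-step rules, invoking determinism and the analogous statement for $\execseqnoargs$ in the sequence and while cases. From this I derive a convergence correspondence: $\steps{\cnf{s}{\st}}{\Ret{\st'}}$ iff $\steps{r}{\ret{\st'}}$, and similarly for input and output labels, by induction on the derivation of small-step (resp.\ big-step) convergence, peeling off one delay at a time.

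With the convergence correspondence in hand, both implications of the main iff go through by coinduction and inversion. For the forward direction, by totality I pick $r, r_*$ and prove $\wkbism{r}{r_*}$ by coinduction: inversion on $\Wkbism{\cnf{s}{\st}}{\cnf{s_*}{\st_*}}$ yields one of the four cases, the convergence correspondence transfers convergence of the two configurations into convergence of $r$ and $r_*$ to matching outermost actions, and the coinduction hypothesis closes the recursive sub-goals (including a single-delay step in the fourth rule via the delay clause of the auxiliary lemma). The backward direction is symmetric, using the converses of the correspondence to supply the required convergences on configurations.

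The hard part is the delay case and its compatibility with coinductive guardedness. In the fourth rule of each weak bisimilarity exactly one delay is consumed, so I must argue that a single top-level $\delta$ in $r$ corresponds to exactly one small-step delay transition of $\cnf{s}{\st}$, and never more. In the sequence and while cases the outermost constructor of $r$ is produced by $\execseqnoargs$, so the auxiliary lemma has to be established by inverting through the $\execseqnoargs$ rule that matches the top of the sub-resumption returned by $\exec{s_0}{\st}{\cdot}$; getting this inversion right, while keeping the surrounding coinductive proof productive in Coq, is the delicate part of the formalisation.
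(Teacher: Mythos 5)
Your proposal is sound, and it is essentially the natural argument: the paper itself only states this lemma (the proof lives in the Coq development), and the route you take---a one-step correspondence between $\step{s}{\st}{c}$ and the outermost constructor of the $r$ with $\exec{s}{\st}{r}$, lifted to a correspondence of the two convergence predicates, and then a rule-by-rule coinductive transfer between the two (structurally identical) weak bisimilarities---is exactly the content that has to be established. One remark on economy: in the paper's setting you could instead factor your key auxiliary lemma through the small-step evaluation relation $\redm{s}{\st}{r}$ and the proposition that $\exec{s}{\st}{r}$ iff $\redm{s}{\st}{r}$, since each rule of $\rightsquigarrow$ already pairs one transition of $\cnf{s}{\st}$ with one resumption constructor; your induction on $s$ with inversion on $\execseqnoargs$ in effect re-proves that correspondence directly against the big-step rules, which is self-contained but duplicates work. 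Two small points to tighten: (i) the ``resumption shape implies existence of a step'' directions of your auxiliary lemma implicitly use totality of $\stepnoargs$ (or a direct induction on $s$), which you should state; (ii) on the guardedness worry you raise, the standard resolution is to invert the strong bisimilarities so that the constructor of $r$ is literally exposed, and to apply the setoid-transport lemmas only to the \emph{premises} fed to the coinduction hypothesis (evaluation and weak-bisimilarity hypotheses), never to the conclusion built on top of the corecursive call; with that discipline the proof goes through with the plain constructors of $\wkbismnoargs$ and of configuration bisimilarity, without any Mendler-style detour for this particular lemma.
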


The evaluation relation of the small-step semantics is defined in 
Figure~\ref{fig:smallstep} (bottom half). 
It is the terminal many-step reduction relation, defined coinductively. 
The proposition 
$\redm{s}{\st}{r}$ means that running $s$ from the state $\st$ 
produces the resumption $r$.

The big-step and small-step semantics are equivalent. 

\begin{proposition}
\label{prop:semequiv}
For any $s$, $\st$ and $r$,
$\exec{s}{\st}{r}$ iff $\redm{s}{\st}{r}$.
\end{proposition}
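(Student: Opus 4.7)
The plan is as follows. For the forward direction, ``$\exec{s}{\st}{r}$ implies $\redm{s}{\st}{r}$'', I coinduct on $\rightsquigarrow$ and bridge to the big-step via a single structural-induction lemma on $s$; for the converse, I transport the forward direction using totality and determinism of big-step evaluation together with determinism of $\rightsquigarrow$, avoiding a second coinduction. The central helper is: for every $s$, $\st$, $r$, $c$, if $\exec{s}{\st}{r}$ and $\step{s}{\st}{c}$, then the outer constructor of $r$ matches $c$ up to $\bism{}{}$ and the continuations are related by big-step evaluation. Explicitly, $c = \Ret{\st'}$ forces $\bism{r}{\ret{\st'}}$; $c = \Delay{s'}{\st'}$ forces $\bism{r}{\delay{r'}}$ for some $r'$ with $\exec{s'}{\st'}{r'}$; and analogously for $\Inp{s'}{g}$ and $\Outp{v}{s'}{\st'}$.

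\textbf{Proving the helper.} I prove the helper by structural induction on $s$. The atomic cases ($\Skip$, $\Assign{x}{e}$, $\Input{x}$, $\Output{e}$) are read off the rules. For $\Ifthenelse$ and $\While$, inversion on the big-step rule yields an $\execseqnoargs$-derivation starting from the one-delay accumulator $\delay{(\ret{\st})}$; a single unfolding of $\execseqnoargs$ pins the first $\delta$ of $r$ and delivers the required big-step derivation for the residual statement, reassembled in the $\While$-case into $\exec{\Seq{s_t}{\While{e}{s_t}}}{\st}{\cdot}$ from the $\execseqnoargs$-premises. The main obstacle sits in the $\Seq{s_0}{s_1}$ case, where the small-step semantics peels off atomic actions from $s_0$ while the big-step semantics bundles all of $s_0$'s behaviour inside $\execseqnoargs$. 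I apply the IH on $s_0$ to classify $s_0$'s first small-step: if it is $\Ret{\st'}$, an auxiliary inversion principle for $\execseqnoargs$ (the root constructor of the first argument is reflected unchanged in the second, and a $\ret{\st''}$-accumulator leaves $\exec{s}{\st''}{\cdot}$ at the root) reduces the obligation to the IH on $s_1$; in the action and delay subcases I reassemble $\exec{\Seq{s'_0}{s_1}}{\st'}{\cdot}$ for the residual via the big-step $\Seq$-rule, with no further IH needed.

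\textbf{Finishing both directions.} With the helper in hand, the forward implication follows by a one-step coinduction on $\rightsquigarrow$: totality of $\stepnoargs$ (a straightforward structural induction on $s$) provides some $c$ with $\step{s}{\st}{c}$, the helper dictates the outer shape of $r$ and furnishes the big-step derivation for the corecursive call, and setoid-closure of $\rightsquigarrow$ absorbs the $\bism{}{}$. For the converse, given $\redm{s}{\st}{r}$, totality of big-step evaluation picks $r_*$ with $\exec{s}{\st}{r_*}$; the forward direction gives $\redm{s}{\st}{r_*}$; and since $\rightsquigarrow$ is deterministic up to $\bism{}{}$ (an easy coinduction using determinism of $\stepnoargs$), we obtain $\bism{r}{r_*}$, so the setoid property of big-step evaluation transports $\exec{s}{\st}{r_*}$ to $\exec{s}{\st}{r}$. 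Throughout the argument, some care is needed to work consistently up to $\bism{}{}$, since $\execseqnoargs$ determines its output resumption only up to bisimilarity.
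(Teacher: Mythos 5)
The paper prints no proof of Proposition 5.1 (it is deferred to the accompanying Coq development), so I can only judge your argument on its own terms: it is correct in outline and is a natural way to obtain the equivalence. Your bridge lemma --- if $\exec{s}{\st}{r}$ and $\step{s}{\st}{c}$ then the head of $r$ agrees with $c$ up to $\bism{}{}$ and the tail is again a big-step evaluation of the residual configuration --- does go through by structural induction on $s$: the if/while cases follow by one inversion of the $\execseqnoargs$-premise at the accumulator $\delay{(\ret{\st})}$, and in the $\Seq{s_0}{s_1}$ case the IH on $s_0$ (and, in the $\Ret{\st'}$ subcase, on $s_1$) together with reassembly via the big-step sequencing rule suffices, provided you also record --- as you do implicitly in your closing caveat --- that $\execseqnoargs$ is a setoid predicate in its accumulator argument, since the IH only yields the residual evaluation up to $\bism{}{}$. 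Your converse via totality and determinism-up-to-$\bism{}{}$ of $\Rightarrow$ (the paper's Lemma 4.1) plus determinism of $\rightsquigarrow$ (from determinism of $\stepnoargs$) is a legitimate shortcut that avoids a second coinduction. One point deserves sharpening, given the paper's emphasis on guarded corecursion: in the forward coinduction you propose to ``absorb the $\bism{}{}$'' by setoid-closure of $\rightsquigarrow$ \emph{after} the corecursive call, which is mathematically fine but not syntactically guarded in Coq; the standard remedy is either to strengthen the coinductive statement to ``$\exec{s}{\st}{r_*}$ and $\bism{r}{r_*}$ imply $\redm{s}{\st}{r}$'', or to rewrite with the setoid property of $\Rightarrow$ \emph{before} the corecursive call. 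With that adjustment your plan matches what a formalized proof must contain.
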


% ========================================================================
% ========================================================================
% ========================================================================

\section{Delay-Free Big-Step Semantics}

So far we explicitly dealt with delay steps in a fully general and
constructive manner. However, it is also possible to define big-step
semantics in terms of resumptions without delay steps, by collapsing
them on the fly, if they come in finite sequences.  In this section,
we define a delay-free semantics for configurations that lead to
responsive resumptions. 

We define \emph{delay-free resumptions}, $r : \resd$, and their (strong)
bisimilarity coinductively by
\[
\small
\begin{array}{c}
\infer={ \retd{\st} : \resd}{ \st : \state}
\quad
\infer={
  \inpd{f} : \resd
}{
  f : \Int \rar\ \resd
}
\quad
\infer={
  \outpd{v}{r} : \resd
}{
  v : \Int
  &
  r : \resd
}
\\[1ex]
\infer={
  \bism{\retd{\st}}{\retd{\st}}
}{}
\quad
\infer={
  \bism{\inpd{f}}{\inpd{f_*}}
}{
  \forall v.\, \bism{f~v}{f_*~v}
}
\quad
\infer={
  \bism{\outpd{v}{r}}{\outpd{v}{r_*}}
}{
  \bism{r}{r_*}
}
\end{array}
\]

A responsive delayful resumption $r : \res$ can be normalized into a
delay-free resumption by collapsing the finite sequences of delay
steps it has between observable actions.  We define normalization,
$\mathit{norm}: (r:\res) \to \resp{r} \to \resd$, and embedding of
delay-free resumptions into delayful resumptions, $\emb : \resd \to
\res$ by corecursion. In the definition of $\norm$, we examine 
the proof of $\resp{r}$, i.e., $r$'s responsiveness.
\[
\small
\hspace*{-3mm}
\begin{array}{rcl@{\hspace*{0mm}}rcl}
\norm~r~(\mathsf{resp\mbox{-}ret} ~\st ~\_\,) &=& \retd{\st}&
  \emb~(\retd{\st}) & = & \ret{\st} \\
\norm~r~(\mathsf{resp\mbox{-}in} ~f ~\_~k) &=& 
\inpd{(\lambda v.\, \norm~(f~v) ~(k~v))} &
  \emb~(\inpd{f}) & = & \inp{(\lambda v.\ \emb~(f~v))} \\
\norm~r~(\mathsf{resp\mbox{-}out} ~v ~r'~\_~h) 
&=& \outpd{v}{(\norm~r'~h)} &
  \emb~(\outpd{v}{r}) & = & \outp{v}{(\emb~r)}
\end{array}
\]

A delayful resumption is weakly bisimilar to a delay-free one if and
only if it is responsive and its normal form is strongly bisimilar to
the same.

\begin{lemma}
  For any $r : \res$ and $r_* : \resd$, $\wkbism{r}{\emb~r_*}$ iff
  $\bism{\norm~r~h}{r_*}$ for some $h: \resp{r}$.
\end{lemma}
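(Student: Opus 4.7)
The plan is to prove both directions by coinduction, supported by three helper facts: (i) $\resp{(\emb~r_*)}$ for every $r_* : \resd$, by straightforward coinduction on $r_*$ since $\emb$ never introduces a delay; (ii) a \emph{convergence transfer} lemma for weak bisimilarity, namely that if $\wkbism{r}{r'}$ and $\steps{r'}{r''}$, then $\steps{r}{r'''}$ for some $r'''$ matching $r''$ in shape (equal terminating state, or same kind of observable action with pointwise weakly bisimilar residuals); and (iii) that $\emb$ respects strong bisimilarity, by routine coinduction. Helper (ii) is proved by induction on the derivation of $\steps{r'}{r''}$ combined with inversion on $\wkbism{r}{r'}$: when $r'$ is already $\ret{\st}$ (or $\inp{f}$, or $\outp{v}{r_1}$) only the corresponding rule among the first three of $\wkbismnoargs$ can apply, because the delay-delay rule needs $r'$ to be a delay, and the other two shape-specific rules would demand $r'$ to converge to a different kind of action, contradicting determinism of convergence up to bisimilarity; in the inductive case $r' = \delay{r_0'}$, rule~4 reduces via the induction hypothesis while rules 1--3 supply the conclusion directly.

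With (i)--(iii) in hand, a short coinductive argument using (ii) shows that weak bisimilarity preserves responsiveness, so from $\wkbism{r}{\emb~r_*}$ and (i) we obtain some $h : \resp{r}$. For the forward direction I would then prove $\bism{\norm~r~h}{r_*}$ by coinduction, analysing $h$. If $h = \mathsf{resp\mbox{-}ret}~\st~c$, then $\steps{r}{\ret{\st}}$ transfers via (ii) to $\steps{\emb~r_*}{\ret{\st}}$; since $\emb~r_*$ is never a delay, this forces $r_* = \retd{\st}$, making $\norm~r~h$ and $r_*$ syntactically identical. The $\mathsf{resp\mbox{-}in}$ case forces $r_* = \inpd{f_*}$ and, after applying (ii), leaves pointwise obligations $\bism{\norm~(f~v)~(k~v)}{f_*~v}$ that are discharged by the coinduction hypothesis; the output case is analogous.

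For the backward direction, assume $h : \resp{r}$ and $\bism{\norm~r~h}{r_*}$, and prove $\wkbism{r}{\emb~r_*}$ directly by coinduction on $h$. If $h = \mathsf{resp\mbox{-}ret}~\st~c$, the strong bisimilarity forces $r_* = \retd{\st}$, so $\emb~r_* = \ret{\st}$, and rule~1 of $\wkbismnoargs$ applies using $c$. If $h = \mathsf{resp\mbox{-}in}~f~c~k$, then $r_* = \inpd{f_*}$ with $\emb~r_* = \inp{(\lambda v.\ \emb~(f_*~v))}$; inverting the strong bisimilarity exposes $\bism{\norm~(f~v)~(k~v)}{f_*~v}$ for each $v$, so rule~2 applies with the coinduction hypothesis supplying the required $\wkbism{f~v}{\emb~(f_*~v)}$; the output case is symmetric.

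The main obstacle is helper~(ii): one has to interleave induction on the convergence derivation with inversion on coinductive weak bisimilarity, appealing to determinism of convergence to eliminate the mismatched rule combinations. Once (ii) is settled, both directions of the lemma are routine coinductions whose productivity is immediate because each recursive call sits strictly beneath an input or output constructor (or terminates at the $\mathsf{resp\mbox{-}ret}$ case), so Mendler-style rephrasing is not needed here.
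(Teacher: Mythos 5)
Your argument is correct in outline, and it cannot be compared line-by-line with the paper, which states this lemma without a textual proof and defers the details to the accompanying Coq development; what you propose is exactly the kind of infrastructure that development needs. Your helper (ii) is the real content: it goes through by induction on the convergence derivation with inversion on $\wkbismnoargs$, but note that the delay case with rules 1--3 (not only the base cases) needs determinism of convergence up to strong bisimilarity, and that aligning the residuals afterwards uses the paper's lemma that weak bisimilarity (and responsiveness) are setoid predicates; you gesture at both, but they should be stated as explicit auxiliary facts. Two further small points to make explicit: in the forward direction you actually invoke the symmetric form of (ii) (transferring a convergence of $r$ to $\emb~r_*$), which is licensed by the symmetry of $\wkbismnoargs$ established in the paper; and the $h$ you obtain from ``weak bisimilarity preserves responsiveness'' is then handled by proving the strong bisimilarity for an \emph{arbitrary} $h$, which is what makes the coinduction go through and also matches the paper's remark that normal forms are independent of the choice of $h$ up to strong bisimilarity. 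Your observation that Mendler-style coinduction is not needed here is right, since you work with the first formulation $\wkbismnoargs$ and strong bisimilarity of delay-free resumptions, whose constructors directly guard the corecursive calls; helper (iii) is in fact not used in the argument as you set it up and could be dropped.
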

(The convergence proofs of a resumption are strong bisimilar, so $h$
is unique up to that extent.)

\begin{corollary}
  \emph{(i)} For any $r : \res$, $h: \resp{r}$, $\wkbism{r}{\emb~(\norm~r~h)}$.
  \emph{(ii)} For any $r$, $h:\resp{r}$ and $r_*$, $h_*:\resp{r_*}$,
  $\wkbism{r}{r_*}$ iff $\bism{\norm~r~h}{\norm~r_*~h_*}$.
\end{corollary}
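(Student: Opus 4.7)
The plan is to derive both parts from the preceding Lemma (call it Lemma~L), which states that $\wkbism{r}{\emb~r_*}$ iff there exists $h : \resp{r}$ with $\bism{\norm~r~h}{r_*}$, together with the parenthetical fact that convergence/responsiveness proofs are strongly bisimilar, so that normalization is essentially independent of the chosen proof of responsiveness (formally: $\bism{\norm~r~h_1}{\norm~r~h_2}$ for any two $h_1, h_2 : \resp{r}$, provable by a straightforward coinduction).

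For part (i), I would instantiate the backward direction of Lemma~L with $r_* := \norm~r~h$. Its premise, $\bism{\norm~r~h}{\norm~r~h}$, is just reflexivity of strong bisimilarity (which is known to be an equivalence). The conclusion is exactly $\wkbism{r}{\emb~(\norm~r~h)}$.

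For part (ii), the forward direction is handled as follows. Assume $\wkbism{r}{r_*}$. By part (i) applied to $r_*$, we have $\wkbism{r_*}{\emb~(\norm~r_*~h_*)}$, so by transitivity of weak bisimilarity (established earlier), $\wkbism{r}{\emb~(\norm~r_*~h_*)}$. Apply the forward direction of Lemma~L to obtain some $h' : \resp{r}$ with $\bism{\norm~r~h'}{\norm~r_*~h_*}$. Using the auxiliary fact that $\bism{\norm~r~h}{\norm~r~h'}$ (since both $h$ and $h'$ witness the responsiveness of the same $r$), transitivity of strong bisimilarity yields $\bism{\norm~r~h}{\norm~r_*~h_*}$. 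The backward direction is symmetric: from $\bism{\norm~r~h}{\norm~r_*~h_*}$, since $\emb$ preserves strong bisimilarity (immediate by coinduction on $\emb$'s defining equations) and strong bisimilarity implies weak bisimilarity, we get $\wkbism{\emb~(\norm~r~h)}{\emb~(\norm~r_*~h_*)}$; combining with two applications of part (i) and using symmetry and transitivity of weak bisimilarity, we conclude $\wkbism{r}{r_*}$.

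The main obstacle I anticipate is purely bookkeeping: ensuring that the two convergence proofs $h$ (from the hypothesis) and $h'$ (produced by Lemma~L) are interchanged correctly, and that the setoid-character of $\norm$ in its second argument is explicitly invoked. Everything else is a clean chaining of symmetry and transitivity of the two bisimilarity relations.
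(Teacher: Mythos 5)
Your proposal is correct and follows essentially the route the paper intends: both parts are read off from the preceding lemma, instantiating its backward direction at $r_* := \norm~r~h$ for (i), and for (ii) chaining (i) with the lemma via symmetry/transitivity of $\wkbismnoargs$ and $\approx$, using the paper's parenthetical remark that normal forms are strongly bisimilar irrespective of the chosen responsiveness proof to reconcile $h$ with the existentially produced witness. No gaps; the bookkeeping you flag is exactly what the formal (Coq) proof has to do.
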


In Figure~\ref{fig:nodelayexec}, we define the delay-free big-step
semantics for responsive programs. Here we have an inductive
definition of a parameterized evaluation relation $\execcnoargs(X)$
defined in terms of $X$, for any relation $X$, nested into a
coinductive definition of an extended evaluation relation
$\execseqnoargs$, defined in terms of
${\execcnoargs}(\execseqnoargs)$.  Finally, the actual evaluation
relation $\execDnoargs$ of interest is obtained by instantiating
$\execcnoargs$ at $\execseqnoargs$. Since we collapse delay-steps on
the fly, an assignment immediately terminates at the updated state.
Likewise, testing the guard of a condition or a while-loop takes no
time. The crucial rules are those for sequence and while-loop. If the
first statement of a sequence or the body of a while-loop terminate
silently, the second statement or the new iteration of the loop are
run using the inductive evaluation.  The coinductive extended
evaluation is used only if the first statement or the body perform at
least one input or output action.

This way, we make sure that only a finite number of delay steps may be
collapsed between two observable actions, while allowing for diverging
runs which perform input and output every now and then. Indeed, if we
replaced the while-ret rule with
\[
\small
\infer{
  \execc{X}{\While{e}{s_t}}{\st}{r'}
}{
  \istrue{\st}{e}
  &\execc{X}{s_t}{\st}{\retd{\st'}}
  &\execseq{\While{e}{s_t}}{\retd{\st'}}{r'}
}
\]
we would obtain semantic anomalies. E.g., 
$\execD{\While{\true}{\Skip}}{\st}{r}$ would be derived
for any $r : \resd$.

\begin{figure}[t]
\[
\small
\begin{array}{c}
\quad
\infer{\execc{X}{\Assign{x}{e}}{\st}{\retd{(\update{\st}{x}{\eval{e}{\st}})}}
}{}
\qquad
\infer{\execc{X}{\Skip}{\st}{\retd{\st}}
}{}
\qquad 
\infer{
  \execc{X}{\Seq{s_0}{s_1}}{\st}{r}
}{
  \execc{X}{s_0}{\st}{\retd{\st'}}
  &\execc{X}{s_1}{\st'}{r}
}
\\[1ex]
\infer{
  \execc{X}{\Seq{s_0}{s_1}}{\st}{\inpd{f'}}
}{
  \execc{X}{s_0}{\st}{\inpd{f}}
  &\forall v.\, \X{s_1}{f\, v}{f'\, v}
}
\quad
\infer{
  \execc{X}{\Seq{s_0}{s_1}}{\st}{\outpd{v}{r'}}
}{
  \execc{X}{s_0}{\st}{\outpd{v}{r}}
 &\X{s_1}{r}{r'} 
}
\\[1ex]
\infer{
  \execc{X}{\Ifthenelse{e}{s_t}{s_f}}{\st}{r}
}{
  \istrue{\st}{e}
  &\execc{X}{s_t}{\st}{r}
}
\quad
\infer{
  \execc{X}{\Ifthenelse{e}{s_t}{s_f}}{\st}{r}
}{
  \isfalse{\st}{e}
  &\execc{X}{s_f}{\st}{r}
}
\\[1ex]
\infer{
  \execc{X}{\While{e}{s_t}}{\st}{r}
}{
  \istrue{\st}{e}
  &\execc{X}{s_t}{\st}{\retd{\st'}}
  &\execc{X}{\While{e}{s_t}}{\st'}{r}
}
\\[1ex]
\infer{
  \execc{X}{\While{e}{s_t}}{\st}{\inpd{f'}}
}{
  \istrue{\st}{e}
  &\execc{X}{s_t}{\st}{\inpd{f}}
  &\forall v.\, \X{\While{e}{s_t}}{f\, v}{f'\, v}
}
\quad 
\infer{
  \execc{X}{\While{e}{s_t}}{\st}{\outpd{v}{r'}}
}{
  \istrue{\st}{e}
  &\execc{X}{s_t}{\st}{\outpd{v}{r}}
  &\X{\While{e}{s_t}}{r}{r'}
}
\\[1ex]
\infer{
  \execc{X}{\While{e}{s_t}}{\st}{\retd{\st}}
}{
  \isfalse{\st}{e}
}
\\[1ex]
\infer{
  \execc{X}{\Input{x}}{\st}{\inpd{(\lambda v.\retd{\update{\st}{x}{v}})}}
}{}
\qquad
\infer{
  \execc{X}{\Output{e}}{\st}{\outpd{(\eval{e}{\st})}{(\retd{\st})}}
}{}
\\[3ex]
\infer={
  \execseq{s}{\retd{\st}}{r}
}{
  X \subseteq \execseqnoargs
  & 
  \execc{X}{s}{\st}{r}
}
\qquad
\infer={
  \execseq{s}{\inpd{f}}{\inpd{f'}}
}{
  \forall v.\, \execseq{s}{f~v}{f'~v}
}
\qquad
\infer={
  \execseq{s}{\outpd{v}{r}}{\outpd{v}{r'}}
}{
  \execseq{s}{r}{r'}
}
\\[3ex]
\infer{\execD{s}{\st}{r}}{
  \execc{\execseqnoargs}{s}{\st}{r}
}
\end{array}
\]
\caption{Delay-free big-step semantics
}
\label{fig:nodelayexec}
\end{figure}

Coming back to the examples of the previous section,
we have
$\execD{\Input{x};\While{\true}{(\Output{x};\Assign{x}{x+1})}}{\st}
{\inp{(\lambda n.~\upD~n)}}$
where $\upD$ is defined corecursively by 
$\upD~n = \outpd{n}{(\upD~(n+1))}$.
We also have
$\execD{\While{\true}{\Assign{z}{x};\Output{z}}}{\st}
{\addD~(\st~x)}$
and 
$\exec{\Assign{z}{x};\While{\true}{\Output{z}}}{\st}
{\addD~(\st~x)}$
where $\addD$ is defined corecursively by
$\addD~n = \outpd{n}{(\addD~n)}$.
Since the delay steps are collapsed on the fly in the delay-free semantics,  
the two statements produce the same, i.e., strongly bisimilar, (delay-free) 
resumptions. 
The delay-free semantics does not account for (i.e., does not assign a 
resumption to) non-responsive configurations, such as 
$\While{\true}{\Skip}$ and the interactive multipliers from the
previous section (since they diverge given a negative input for the first
operand), with any initial state.

We state adequacy of the delay-free semantics by relating it to the
delayful semantics of Section~\ref{sec:exec}.  Namely, for
configurations leading to responsive resumptions they agree.

\begin{proposition}[Soundness]
For any $s$, $\st$, $r : \resd$, if $\execD{s}{\st}{r}$
then there exists $r' : \res$ such that $\exec{s}{\st}{r'}$
and $\wkbism{\emb~r}{r'}$.
\end{proposition}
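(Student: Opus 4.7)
The plan is to first invoke totality of the basic big-step semantics (the earlier lemma on evaluation being total and deterministic up to bisimilarity) to obtain some $r' : \res$ with $\exec{s}{\st}{r'}$. The goal then reduces to establishing $\wkbism{\emb~r}{r'}$. Since both $\execDnoargs$ and weak bisimilarity have the shape ``induction nested into coinduction'', I would work with the Mendler-style formulation $\wkbismcnoargs$ so that the resulting proof term passes Coq's guardedness checker.

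The proof of $\wkbism{\emb~r}{r'}$ is then by coinduction on $\wkbismcnoargs$: I would carry a relation $X$ consisting of pairs $(\emb~r_0, r_0')$ arising from matching delay-free and delayful extended evaluations of some statement from some configuration, so that $X$ also covers the residuals produced by sequence/while under input or output. To prove the coinductive step, I perform an inner induction on the derivation of $\execcni{s}{\st}{r}$, producing a $\wkbismi{X}{\emb~r}{r'}$ derivation in each case. The base cases ($\Skip$, assignment, $\Input{x}$, $\Output{e}$, while-false) follow by inversion on $\exec{s}{\st}{r'}$; the $\delta$-steps emitted on the delayful side are absorbed by the two asymmetric $\delta$-rules of $\wkbisminoargs{X}$. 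The silent-termination subcases of sequence, the two branches of $\mathsf{if}$, and the iteration step of $\mathsf{while}$ recurse via the induction hypothesis on their inductive $\execcnoargs$ premise, threading further delays on the delayful side. In the subcases of sequence or while where the first statement / body produces an $\inpd{\cdot}$ or $\outpd{\cdot}{\cdot}$ head, the $\execseqnoargs$-premise in the delay-free rule pairs with the corresponding (delayful) $\execseqnoargs$-continuation, and this pair is exactly what populates $X$ for the residuals, closing the $\wkbisminoargs{X}$-derivation and then the coinductive step.

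The main obstacle I expect is precisely the pairing of delay-free $\execseqnoargs$-continuations with delayful $\execseqnoargs$-continuations in the sequence and while cases. To make it work, I would need a sequencing inversion lemma for the delayful semantics saying that $\exec{\Seq{s_0}{s_1}}{\st}{r'}$ decomposes as $\exec{s_0}{\st}{r_0}$ together with $\execseq{s_1}{r_0}{r'}$, and analogously for $\mathsf{while}$; combined with the setoid property of the delayful extended evaluation and the fact that $\execseqnoargs$ propagates through $\delta$-steps, this furnishes exactly the delayful $\execseqnoargs$-continuation needed to witness $X$ on the residuals. Finally, Mendler-style coinduction is essential in the closing move, where we ``flush'' an $X$-derivation back through the coinduction hypothesis: the $X \subseteq \wkbismcnoargs$ premise of the Mendler rule of $\wkbismcnoargs$ lets the corecursive call stay guarded, in the same spirit as in the proof of transitivity of $\wkbismcnoargs$ discussed earlier.
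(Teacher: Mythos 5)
Your plan is correct and follows essentially the approach the paper intends: the paper omits the detailed proof of this proposition (deferring to the accompanying Coq development) but demonstrates exactly this architecture on the $\Scount$ example---an auxiliary relation fed by the $\execseqnoargs$-continuations, an inner induction on the parameterized inductive evaluation $\execcnoargs$ together with its monotonicity, and Mendler-style coinduction to keep the corecursive call guarded---and your reduction via totality of the delayful semantics, inversion of its sequence/while rules, and the asymmetric delay-absorption rules of $\wkbisminoargs{X}$ is the right way to instantiate that technique for soundness. The one point needing care, which you gesture at but do not pin down, is that $X$ must be given an inductive definition (in the style of the paper's relation $\Rnoargs$) making it closed under applying matching delay-free and delayful extended evaluations to already $X$-related residual pairs, since nested sequences and while-loops produce residuals of residuals rather than a single layer of continuations from a common configuration.
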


\begin{proposition}[Completeness]
\label{prop:execc_complete}
For any $s$, $\st$, $r : \res$ and $h:\resp{r}$, if $\exec{s}{\st}{r}$, then
$\execD{s}{\st}{\norm~r~h}$.
\end{proposition}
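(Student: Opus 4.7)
The plan is to strengthen the statement to a joint claim provable by mixed induction-coinduction. Alongside (i), the proposition itself, I would prove (ii): for all $s, r, r'$ with $h_r : \resp{r}$ and $h_{r'} : \resp{r'}$, if $\execseq{s}{r}{r'}$ holds in the delayful semantics, then $\execseq{s}{\norm~r~h_r}{\norm~r'~h_{r'}}$ holds in the delay-free semantics. Part (ii) supplies the coinductive evidence needed to discharge the $X \subseteq \execseqnoargs$ side-conditions that arise at the leaves of the inductive $\execcnoargs$ derivations built for (i). Two preparatory lemmas will be useful: responsiveness transfers from $r$ to any $r_0$ with $\execseq{s}{r_0}{r}$; and, from the convergence witness embedded in $h$, one can read off the top-level shape of $\norm~r~h$ and extract a residual responsiveness witness for the next observable action.

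For (ii) I would proceed by coinduction on the delay-free $\execseqnoargs$, case-splitting on the normal form of $r$. When $\norm~r~h_r = \retd{\st_0}$, the resumption $r$ converges silently to $\ret{\st_0}$, so from $\execseq{s}{r}{r'}$ one extracts $\exec{s}{\st_0}{r'}$; the $\execseqnoargs$ rule for $\retd$ then reduces the goal to a delay-free $\execcnoargs$ derivation supplied by part (i). When $\norm~r~h_r$ is $\inpd{f_n}$ or $\outpd{v}{r_n}$, the corresponding $\execseqnoargs$ rule applies, the residual premises being discharged coinductively by (ii) itself.

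For (i) I would use structural induction on $s$, with a nested induction on the convergence witness inside $h$ for the while-loop case. The atomic cases ($\Skip$, $\Assign{x}{e}$, $\Input{x}$, $\Output{e}$) and the conditional are immediate from the corresponding $\execcnoargs$ rules. For $\Seq{s_0}{s_1}$, I apply the outer IH to $s_0$ and case-analyze on the normal form of $s_0$'s resumption: silent termination at $\st'$ reduces to the outer IH on $s_1$ from $\st'$; input and output dispatch the matching sequence rule of $\execcnoargs$, with the $X$-premise discharged coinductively by (ii). The $\While{e}{s_t}$ case with $\istrue{e}{\st}$ is the most delicate: if $s_t$ terminates silently at some $\st'$, I apply the while-ret rule of $\execcnoargs$ and recurse on the while loop itself, which is justified by the inner induction since the convergence witness has strictly shrunk.

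The main obstacle is this while-ret sub-case, where the inductive $\execcnoargs$ derivation must re-enter the same statement: well-foundedness must come from the convergence witness inside $h$, and threading this measure through the structural recursion on $s$ while simultaneously satisfying Coq's guardedness conditions for the coinductive calls to (ii) is the principal implementation burden. The Mendler-style presentation already exploited elsewhere in the development, together with the joint formulation of (i) and (ii), is what I expect to make the argument go through cleanly.
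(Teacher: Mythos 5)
Your decomposition is essentially the technique the paper demonstrates on the $\Scount$ example: your claim (ii) is precisely the statement ``$\Rnoargs \subseteq \execseqnoargs$'' for the relation $\Rnoargs$ relating $(s,\norm~r~h_r)$ to $\norm~r'~h_{r'}$ whenever $\execseq{s}{r}{r'}$ holds in the delayful semantics; the nested induction on the convergence witness inside $h$ for the silent while-iteration is the right well-founded measure; and the responsiveness-transfer lemma you mention is indeed needed. The gap is in how you tie (i) and (ii) together. As written, (i) is proved at the instance $X := \execseqnoargs$ with its $X$-premises ``discharged coinductively by (ii)'', while (ii)'s $\retd$-case is ``supplied by part (i)''. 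Taken as two separate lemmas this is circular (neither can be established first); taken as one mutual induction--coinduction it is exactly the improperly guarded pattern---corecursive calls buried inside an inductive derivation that sits under an $\execseqnoargs$ constructor---which is the very pattern the Mendler-style rule was introduced to avoid, so merely invoking ``Mendler-style'' does not repair it.

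The missing move is to prove the inductive half parameterized at the coinduction candidate rather than at $\execseqnoargs$: show, by your structural/convergence-witness induction, that $\exec{s}{\st}{r}$ and $h:\resp{r}$ yield $\execc{\Rnoargs}{s}{\st}{\norm~r~h}$, where $\Rnoargs$ is the concrete relation above, defined with no reference to the delay-free $\execseqnoargs$ and hence containing no coinductive calls. Then (ii), i.e.\ $\Rnoargs \subseteq \execseqnoargs$, is proved by a single coinduction in which the coinduction hypothesis is used \emph{only} to discharge the side condition $X \subseteq \execseqnoargs$ of the Mendler rule (instantiated at $X := \Rnoargs$); the $\execc{\Rnoargs}{s}{\st_0}{\norm~r'~h_{r'}}$ derivation plugged in at the $\retd$-case is a closed, previously proved term, which is what makes the corecursion guarded. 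Finally the proposition follows from $\execc{\Rnoargs}{s}{\st}{\norm~r~h}$ and (ii) by monotonicity of $\execcnoargs$ (Lemma~\ref{lemma:execc_mono})---a step absent from your plan but essential, since it is the only way the instance at $\execseqnoargs$ is ever reached. You will also need that the evaluation relations are setoid predicates, since extracting a run of $s$ from $\st_0$ out of $\execseq{s}{r}{r'}$ after stripping the silent prefix determines the residual resumption only up to bisimilarity.
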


The proofs are omitted due to the
space limitation. They are nontrivial and the details can be found in
the accompanying Coq development. Below we demonstrate 
the key proof technique on an example. 

Consider the statement $\Scount = \While{\true}{(\Ifthenelse{i >
0}{\Assign{i}{i-1}} {(\Output{x};\Assign{x}{x+1};\Assign{i}{x})})}$. It
counts up from 0, so we should have $\execD{\Scount}{\st}{\upD~0}$ for
an initial state $\st$ that maps $x$ and $i$ to 0. 
We need coinduction since $\Scount$ performs outputs infinitely often;
we also need induction, nested into coinduction, since 
the loop silently iterates $n$ times each time before outputting $n$.
Note that the latency is finite but unbounded.

We cannot perform induction inside coinduction na\"ively. That would 
be rejected by Coq's syntactic
guardedness checker, which is there to ensure productivity of coinduction. 
Mendler-style coinduction comes to rescue. 
Let $\R{s}{r}{r'}$ be a relation on pairs $(s,r)$ of a statement and
a resumption and resumptions $r'$, defined inductively by 
\[
\small
\infer{ \R{\Scount}{\retd{\maps{\map{x}{n},\map{i}{n}}}}{\upD~n} }{}
\quad
\infer{
  \R{s}{\outpd{v}{r}}{\outpd{v}{r'}}
}{
  \R{s}{r}{r'} }
\quad
\infer{ \R{s}{\retd{\st}}{r} }{ \execc{\Rnoargs}{s}{\st}{r} }
\]
The key fact is that $\Rnoargs$ is stronger than $\execseqnoargs$
(Lemma~\ref{lemma:R_imp_execseq} below).

We first prove that $\execcnoargs$ is monotone by induction.

\begin{lemma}\label{lemma:execc_mono}
For any $X$, $Y$, $s$, $\st$ and $r$ such that $X \subseteq Y$,
if $\execc{X}{s}{\st}{r}$, then $\execc{Y}{s}{\st}{r}$.
\end{lemma}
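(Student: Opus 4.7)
The plan is to proceed by straightforward rule induction on the derivation of $\execc{X}{s}{\st}{r}$. Inspecting the rules in Figure~\ref{fig:nodelayexec} that define $\execcnoargs(X)$, the parameter $X$ appears only positively, and only in two places: the $\X{s_1}{f\,v}{f'\,v}$ and $\X{s_1}{r}{r'}$ premises in the two non-ret sequence rules, and the analogous $\X{\While{e}{s_t}}{f\,v}{f'\,v}$ and $\X{\While{e}{s_t}}{r}{r'}$ premises in the two non-ret while rules. In every other place, either there are no premises or the premises are themselves recursive applications of $\execc{X}{\cdot}{\cdot}{\cdot}$.

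Accordingly, the induction proceeds as follows. For the base cases---assignment, skip, input, output, and while-false---the conclusion for $Y$ is obtained by applying the very same rule, since these rules have no premise mentioning $X$. For the if-then-else, sequence-with-ret, and while-with-ret cases, all premises have the form $\execc{X}{\cdot}{\cdot}{\cdot}$, so the induction hypothesis yields corresponding premises $\execc{Y}{\cdot}{\cdot}{\cdot}$ and the same rule reassembles the conclusion for $Y$. Finally, for the four cases where $X$ occurs as a premise (sequence with input/output first statement, while with input/output body), the induction hypothesis converts the $\execc{X}{\cdot}{\cdot}{\cdot}$ premise into $\execc{Y}{\cdot}{\cdot}{\cdot}$, and the assumption $X \subseteq Y$ converts the $X$-premise into the corresponding $Y$-premise; reapplying the same rule closes the case.

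There is no real obstacle: the lemma is essentially a syntactic monotonicity check for a positive inductive definition. The only minor subtlety is that some rules (sequence-with-input and while-with-input) quantify universally over $v$, so the induction hypothesis must be instantiated inside the binder, but this is routine in Coq with the appropriate \texttt{intros} and \texttt{apply} pattern.
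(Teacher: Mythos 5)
Your proof is correct and matches the paper's approach: the paper proves this lemma by induction on the derivation of $\execc{X}{s}{\st}{r}$, exactly the rule induction you describe, with the $X \subseteq Y$ assumption discharging the $X$-premises in the sequence and while rules whose first part performs an input or output. Your case analysis of where $X$ occurs in the rules of Figure~\ref{fig:nodelayexec} is accurate, so the argument goes through as stated.
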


The following two lemmata are proved by 
straightforward application of the rules in
Figure~\ref{fig:nodelayexec}.

\begin{lemma}
For any $n$,
$\execc{\Rnoargs}
{\Ifthenelse{i >0}{\Assign{i}{i-1}}{(\Output{x};\Assign{x}{x+1};\Assign{i}{x})}}
{\maps{\map{x}{n},\map{i}{0}}}
{\outpd{n}{(\retd{\maps{\map{x}{n+1},\map{i}{n+1}}})}}$.
\end{lemma}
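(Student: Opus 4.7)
The plan is to give a direct finite derivation in the inductive definition of $\execcnoargs(\Rnoargs)$ from Figure~\ref{fig:nodelayexec}. Since $\eval{i>0}{\maps{\map{x}{n},\map{i}{0}}}$ is false, I would start by applying the else-branch rule for the conditional, which reduces the goal to establishing
$\execc{\Rnoargs}{\Output{x};\Assign{x}{x+1};\Assign{i}{x}}{\maps{\map{x}{n},\map{i}{0}}}{\outpd{n}{(\retd{\maps{\map{x}{n+1},\map{i}{n+1}}})}}$.

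Reading the body as right-associated, I apply the $\Seq$-with-output rule with $s_0 = \Output{x}$ and $s_1 = \Assign{x}{x+1};\Assign{i}{x}$. The $s_0$-premise is discharged by the axiom for $\Output{e}$, yielding $\outpd{n}{(\retd{\maps{\map{x}{n},\map{i}{0}}})}$. The remaining premise is $\R{\Assign{x}{x+1};\Assign{i}{x}}{\retd{\maps{\map{x}{n},\map{i}{0}}}}{\retd{\maps{\map{x}{n+1},\map{i}{n+1}}}}$, which matches exactly the third defining clause of $\Rnoargs$ and so reduces to $\execc{\Rnoargs}{\Assign{x}{x+1};\Assign{i}{x}}{\maps{\map{x}{n},\map{i}{0}}}{\retd{\maps{\map{x}{n+1},\map{i}{n+1}}}}$. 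This last judgement is closed by one use of the $\Seq$-with-silent rule, with intermediate state $\maps{\map{x}{n+1},\map{i}{0}}$ after $\Assign{x}{x+1}$; executing $\Assign{i}{x}$ from that state evaluates $x$ to $n+1$ and so delivers $\maps{\map{x}{n+1},\map{i}{n+1}}$. Both assignment steps are instances of the assignment axiom.

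There is no real obstacle here: the whole derivation is a finite tree in the inductive part of the semantics, with no appeal to coinduction and only a single occurrence of the relation $\Rnoargs$ (as the parameter of the outer $\Seq$). The only points that need care are choosing the correct variant of the $\Seq$ rule at each level — the output variant for the outer sequence whose head action is $\Output{x}$, and the silent variant for the inner sequence of two assignments — and tracking the state updates so that $\Assign{i}{x}$ is evaluated in the state already updated by $\Assign{x}{x+1}$.
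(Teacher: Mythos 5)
Your derivation is correct and is exactly what the paper intends: the paper gives no details, stating only that this lemma follows by straightforward application of the rules of Figure~\ref{fig:nodelayexec}, and your finite derivation (if-false rule, seq-with-output using the third clause of $\Rnoargs$ to fold back into $\execcnoargs(\Rnoargs)$, then the ret-seq rule and the two assignment axioms) spells that out faithfully, with the state updates tracked correctly. The only inessential choice is reading the sequence as right-associated; the left-associated reading yields an equally direct derivation of the same judgement.
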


\begin{lemma}
For any $n$ and $m$,
$\execc{\Rnoargs} 
{\Ifthenelse{i >0}{\Assign{i}{i-1}} {(\Output{x};\Assign{x}{x+1};\Assign{i}{x})}}
{\maps{\map{x}{n},\map{i}{m+1}}}
{\retd{\maps{\map{x}{n},\map{i}{m}}}}$.
\end{lemma}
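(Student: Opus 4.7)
The plan is to derive this by a direct application of the rules for $\execcnoargs(X)$ in Figure~\ref{fig:nodelayexec}, taking $X = \Rnoargs$. The statement in question is an \texttt{if}-statement whose guard $i > 0$ is true in the state $\sigma = \maps{\map{x}{n},\map{i}{m+1}}$, since $\eval{i}{\sigma} = m+1 > 0$. So I would begin by applying the if-true rule, which reduces the obligation to showing
\[
\execc{\Rnoargs}{\Assign{i}{i-1}}{\sigma}{\retd{\maps{\map{x}{n},\map{i}{m}}}}.
\]

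Next I would apply the assignment rule, which yields the resumption $\retd{\update{\sigma}{i}{\eval{i-1}{\sigma}}}$. A small routine calculation shows $\eval{i-1}{\sigma} = m$ and $\update{\sigma}{i}{m} = \maps{\map{x}{n},\map{i}{m}}$ (the latter being an elementary fact about state overwriting when $i$ is already mapped). This closes the proof.

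There is essentially no obstacle here: unlike the previous lemma (which required choosing the else-branch and chaining through a short sequence of assignments and an output), this lemma only exercises the if-true rule followed by the assignment rule, with no sequencing. The only subtle point is to state explicitly that state update absorbs a previous binding of the same variable so that the final state equals the one demanded by the statement; this is a setoid-level identity that the Coq development likely discharges by \texttt{reflexivity} or a short lemma on state extensionality. No appeal to the coinductive extended-evaluation rules of Figure~\ref{fig:nodelayexec}, and no use of Lemma~\ref{lemma:execc_mono}, is required.
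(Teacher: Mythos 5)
Your proof is correct and follows exactly the route the paper intends: the paper dismisses this lemma as a "straightforward application of the rules in Figure~\ref{fig:nodelayexec}", which is precisely your derivation via the if-true rule (the guard $i>0$ holds since $i \mapsto m+1$) followed by the assignment rule and the trivial state-update computation $\update{\sigma}{i}{m} = \maps{\map{x}{n},\map{i}{m}}$. No difference in approach worth noting.
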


The next lemma is proved by induction on $m$,
using the two lemmata just proved. 

\begin{lemma}
For any $n$ and $m$, 
$\execc{\Rnoargs}{\Scount}
{\maps{\map{x}{n},\map{i}{m}}}{\outpd{n}{(\upD~(n+1))}}$.
\end{lemma}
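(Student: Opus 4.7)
The plan is to prove the lemma by straightforward induction on $m$, using the two previous lemmas as building blocks that dispatch the single body-execution of the while-loop for the two possible shapes of the state (either $i = 0$, forcing the output branch, or $i = m+1 > 0$, forcing the decrement branch). The specifically engineered relation $\Rnoargs$ takes care of the coinductive wiring at the point where the body outputs $n$, so no direct coinductive argument is needed at this stage.

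For the base case $m = 0$, I would apply the while-out rule of $\execcnoargs$ with $e = \true$, $\st = \maps{\map{x}{n},\map{i}{0}}$, and target $\outpd{n}{(\upD~(n+1))}$. The guard obligation is trivial, the body obligation $\execc{\Rnoargs}{s_t}{\maps{\map{x}{n},\map{i}{0}}}{\outpd{n}{(\retd{\maps{\map{x}{n+1},\map{i}{n+1}}})}}$ is exactly the content of the previous Lemma (for $i = 0$), and the remaining obligation $\R{\Scount}{\retd{\maps{\map{x}{n+1},\map{i}{n+1}}}}{\upD~(n+1)}$ is discharged by the first defining clause of $\Rnoargs$ instantiated at $k = n+1$.

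For the inductive case $m \mapsto m+1$, I would apply the while-ret rule of $\execcnoargs$ with $\st = \maps{\map{x}{n},\map{i}{m+1}}$ and $\st' = \maps{\map{x}{n},\map{i}{m}}$. The body obligation $\execc{\Rnoargs}{s_t}{\maps{\map{x}{n},\map{i}{m+1}}}{\retd{\maps{\map{x}{n},\map{i}{m}}}}$ is the statement of the previous Lemma (for $i = m+1$), and the remaining obligation $\execc{\Rnoargs}{\Scount}{\maps{\map{x}{n},\map{i}{m}}}{\outpd{n}{(\upD~(n+1))}}$ is precisely the induction hypothesis at $m$.

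There is no serious obstacle here, because the conceptual difficulty has been absorbed into the auxiliary relation $\Rnoargs$: its first rule anchors the infinite continuation of the output stream at every ``reset'' point $\maps{\map{x}{k},\map{i}{k}}$, sparing us from having to open a coinductive argument inside this induction. The only mild point to watch is that each use of the while-rule is only applied under a true guard, which is automatic here since the guard is literally $\true$, and that the intermediate states threaded between the two previous lemmas and the induction hypothesis line up exactly as the operational semantics of the body prescribes.
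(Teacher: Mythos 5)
Your proof is correct and follows the same route as the paper: induction on $m$, with the base case discharged by the while-out rule, the first auxiliary lemma, and the first clause of $\Rnoargs$ (instantiated at $n+1$), and the step case by the while-ret rule, the second auxiliary lemma, and the induction hypothesis. The paper's proof is exactly this induction on $m$ using the two preceding lemmata.
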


\begin{corollary}\label{coro:execc_upD}
For any $n$,
$\execc{\Rnoargs}{\Scount}{\maps{\map{x}{n},\map{i}{n}}}{\upD~n}$.
\end{corollary}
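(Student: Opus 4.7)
The plan is to derive the corollary directly from the immediately preceding lemma by specialization together with a single unfolding of the corecursive definition of $\upD$. First, I would instantiate the lemma at $m := n$, which gives
\[
\execc{\Rnoargs}{\Scount}{\maps{\map{x}{n},\map{i}{n}}}{\outpd{n}{(\upD~(n+1))}}.
\]
This is the same statement as the corollary once we recognize that the target resumption $\upD~n$ unfolds by corecursion to $\outpd{n}{(\upD~(n+1))}$.

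Next, I would close the gap between $\upD~n$ and $\outpd{n}{(\upD~(n+1))}$ by one use of the defining equation of $\upD$. The two resumptions are strongly bisimilar (in fact, by the corecursive definition $\upD~n = \outpd{n}{(\upD~(n+1))}$, they are equal after one unfolding), and since $\execcnoargs$ is a setoid predicate in its resumption argument with respect to strong bisimilarity, rewriting is sound and yields exactly $\execc{\Rnoargs}{\Scount}{\maps{\map{x}{n},\map{i}{n}}}{\upD~n}$.

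There is essentially no obstacle here: the real work was carried out in the preceding lemma, where induction on $m$ was used to collapse the silent loop iterations that decrement $i$ from $m$ down to $0$. The only minor point requiring care is a Coq-level one, namely that the cofixpoint equation for $\upD$ must be exposed before it can be used as a rewrite; this is a routine matter, handled by the standard technique of rewriting with the explicit unfolding lemma for a cofixpoint (or, equivalently, by a setoid rewrite along strong bisimilarity), and needs no further structural argument.
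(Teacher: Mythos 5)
Your proof is correct and matches the paper's (implicit) argument: the corollary is exactly the preceding lemma instantiated at $m := n$, combined with one unfolding of the corecursive equation $\upD~n = \outpd{n}{(\upD~(n+1))}$. Your remark about exposing the cofixpoint unfolding (or rewriting along strong bisimilarity, since evaluation is a setoid predicate) is precisely the only bookkeeping needed.
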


We can now prove that $R$ is stronger than
$\execseqnoargs$ by coinduction and inversion on $\R{s}{r}{r'}$.
Here is the crux: corollary~\ref{coro:execc_upD} together
with the coinduction hypothesis gives $\execseq{\Scount}
{\retd{\maps{\map{x}{n},\map{i}{n}}}}{\upD~n}$, and the use of 
the coinduction hypothesis is properly guarded. 

\begin{lemma}\label{lemma:R_imp_execseq}
For any $s, r$ and $r'$, if
$\R{s}{r}{r'}$ then $\execseq{s}{r}{r'}$
\end{lemma}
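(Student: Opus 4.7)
The plan is to prove $R \subseteq {\execseqnoargs}$ by coinduction on $\execseqnoargs$, doing case analysis (inversion) on the inductive definition of $R$, and exploiting Mendler-style coinduction at the key step.

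First I would set up the coinduction: assume, as coinduction hypothesis, $R \subseteq {\execseqnoargs}$, to be invoked only under a productive constructor of $\execseqnoargs$. Then I would inspect how $\R{s}{r}{r'}$ was derived. In the output case, $r = \outpd{v}{r_0}$ and $r' = \outpd{v}{r'_0}$ with $\R{s}{r_0}{r'_0}$; I apply the $\outpd$ rule of $\execseqnoargs$ and then discharge its premise $\execseq{s}{r_0}{r'_0}$ by the coinduction hypothesis on the smaller derivation $\R{s}{r_0}{r'_0}$. This use of the hypothesis is guarded by the $\outpd$ constructor we just produced.

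The interesting cases both target the first ($\retd$) rule of $\execseqnoargs$. In the base case $\R{\Scount}{\retd{\maps{\map{x}{n},\map{i}{n}}}}{\upD~n}$, I apply that rule with its Mendler parameter $X$ instantiated to $R$ itself: its premise $\execc{R}{\Scount}{\maps{\map{x}{n},\map{i}{n}}}{\upD~n}$ is exactly Corollary~\ref{coro:execc_upD}, and its premise $X \subseteq {\execseqnoargs}$ is exactly the coinduction hypothesis. In the remaining case $\R{s}{\retd{\st}}{r'}$ from $\execc{R}{s}{\st}{r'}$, I again apply the $\retd$ rule with $X := R$: the premise $\execc{R}{s}{\st}{r'}$ is given, and $R \subseteq {\execseqnoargs}$ is again the coinduction hypothesis.

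The main obstacle is productivity. The third constructor of $R$ hides an $\execc{R}{\cdot}{\cdot}{\cdot}$ derivation of arbitrary finite depth whose leaves contain $R$-assumptions; if we tried to carry out the proof in Park style, we would have to walk into this inductive derivation and replace each such $R$-leaf by a $\execseqnoargs$-witness via an unguarded recursive call, which Coq's syntactic guardedness check would reject. Mendler's style removes precisely this obstacle: by taking $X := R$ in the $\retd$ rule of $\execseqnoargs$, the coinduction hypothesis appears only as the single top-level subset premise $R \subseteq {\execseqnoargs}$, never inside the nested $\execc$ derivation, so the proof is recognised as productive. Lemma~\ref{lemma:execc_mono} is not even needed, since we never have to enlarge the parameter relation after the fact.
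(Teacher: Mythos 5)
Your proposal is correct and is essentially the paper's own argument: coinduction on $\execseqnoargs$ with inversion on $\R{s}{r}{r'}$, guarding the corecursive call under $\outpd$ in the output case, and in both $\retd$-shaped cases instantiating the Mendler parameter $X$ with $R$ so that Corollary~\ref{coro:execc_upD} (resp.\ the given $\execc{\Rnoargs}{s}{\st}{r'}$) and the coinduction hypothesis $R \subseteq {\execseqnoargs}$ discharge the premises in a guarded way. Your remark that Lemma~\ref{lemma:execc_mono} is not needed here is also consistent with the paper, which invokes monotonicity only afterwards, for the final proposition.
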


The main proposition follows from 
corollary~\ref{coro:execc_upD}, lemma~\ref{lemma:R_imp_execseq} 
and the monotonicity of $\execcnoargs$ 
(lemma~\ref{lemma:execc_mono}).

\begin{proposition}
For any $n$,
$\execD{\Scount}{\maps{\map{x}{n},\map{i}{n}}}
{\upD~n}$.
\end{proposition}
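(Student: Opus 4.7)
The plan is to assemble the three ingredients the authors have already prepared: Corollary~\ref{coro:execc_upD}, Lemma~\ref{lemma:R_imp_execseq}, and the monotonicity Lemma~\ref{lemma:execc_mono}. By the definition of $\execDnoargs$, the goal $\execD{\Scount}{\maps{\map{x}{n},\map{i}{n}}}{\upD~n}$ unfolds to $\execc{\execseqnoargs}{\Scount}{\maps{\map{x}{n},\map{i}{n}}}{\upD~n}$, so the whole task reduces to replacing the parameter $\Rnoargs$ (at which we already have a derivation) by $\execseqnoargs$.

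First I would invoke Corollary~\ref{coro:execc_upD} to obtain the derivation $\execc{\Rnoargs}{\Scount}{\maps{\map{x}{n},\map{i}{n}}}{\upD~n}$. Then I would apply Lemma~\ref{lemma:R_imp_execseq}, which gives the inclusion $\Rnoargs \subseteq \execseqnoargs$. With that inclusion in hand, the monotonicity Lemma~\ref{lemma:execc_mono} applied to the $\Rnoargs$-derivation yields $\execc{\execseqnoargs}{\Scount}{\maps{\map{x}{n},\map{i}{n}}}{\upD~n}$, and unfolding the definition of $\execDnoargs$ closes the goal.

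There is essentially no obstacle at this stage: the real mathematical work was already absorbed into the preceding lemmata, in particular the construction of the auxiliary inductive relation $\Rnoargs$ that packages the finite-but-unbounded latency between successive outputs, and the guarded coinductive proof of Lemma~\ref{lemma:R_imp_execseq} enabled by Mendler-style coinduction. The present statement is just the final assembly: $\Rnoargs$ is rich enough to witness the $\execcnoargs$-derivation at each observable step, and the monotonicity lemma lets us conclude because the coinductive extended-evaluation parameter only needs to dominate $\Rnoargs$ pointwise.
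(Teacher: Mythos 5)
Your assembly is exactly the paper's proof: Corollary~\ref{coro:execc_upD} gives the $\Rnoargs$-parameterized derivation, Lemma~\ref{lemma:R_imp_execseq} supplies the inclusion $\Rnoargs \subseteq \execseqnoargs$, and monotonicity (Lemma~\ref{lemma:execc_mono}) transports the derivation to $\execc{\execseqnoargs}{\Scount}{\maps{\map{x}{n},\map{i}{n}}}{\upD~n}$, which is the definition of $\execDnoargs$. Correct, and no difference in route from the paper.
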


% ========================================================================
% ========================================================================
% ========================================================================

\section{Classical-Style Big-Step Semantics}

In Section~\ref{sec:resumption}, we augmented the definition of
responsiveness with a divergence option to obtain a concept of
committedness, which is a classically tautological predicate.
Similarly, we can obtain a delay-free semantics for committed
configurations from the delay-free semantics for responsive
configurations of the previous section.  To do so, we extend the
definition of delay-free resumptions with a ``black hole''
constructor, $\bul$, representing silent divergence, arriving at
\emph{classical-style resumptions}, and adjust the definition of
(strong) bisimilarity: %accordingly.
\[
\small
\begin{array}{c}
\infer={ \retc{\st} : \resc}{ \st : \state}
\quad
\infer={
  \inpc{f} : \resc
}{
  f : \Int \rar\ \resc
}
\quad
\infer={
  \outpc{v}{r} : \resc
}{
  r : \resc
}
\quad 
\infer={\bul : \resc}{}
\\[1ex]
\infer={
  \bism{\retc{\st}}{\retc{\st}}
}{}
\quad
\infer={
  \bism{\inpc{f}}{\inpc{f_*}}
}{
  \forall v.\, \bism{f~v}{f_*~v}
}
\quad
\infer={
  \bism{\outpc{v}{r}}{\outpc{v}{r_*}}
}{
  \bism{r}{r_*}
}
\quad
\infer={\bism{\bul}{\bul}}{}
\end{array}
\]

Given a proof $h: \tresp{r}$ of committedness of a delayful resumption
$r : \res$, we can normalize $r$ into a classical resumption by
collapsing the finite delays between observable actions and sending
silent divergence into the black hole.
\[
\small
\hspace*{-5mm}
\begin{array}{rcl@{\hspace*{0mm}}rcl}
\norm~r~(\mathsf{comm\mbox{-}ret}~\st~\_) &=& \retc{\st} &
  \emb~(\retc{\st}) & = & \ret{\st} \\
\norm~r~(\mathsf{comm\mbox{-}in}~f~\_~k) &=& \inpc{(\lambda v.\, \norm ~(f~v) ~(k~v))} & 
  \emb~(\inpc{f}) & = & \inp{(\lambda v.\ \emb~(f~v))} \\ 
\norm~r~(\mathsf{comm\mbox{-}out}~v~r'~\_~h) &=& \outpc{v}{(\norm ~r' ~h)} &
  \emb~(\outpc{v}{r}) & = & \outp{v}{(\emb~r)} \\
\norm~r~(\mathsf{comm\mbox{-}div}~\_) &=& \bul & 
  \emb~\bul & = & \delay{(\emb~\bul)}
\end{array}
\]

Again, a delayful resumption is weakly bisimilar to a classical-style
one if and only if it is committed and its normal form is strongly
bisimilar.

\begin{lemma}
For any $r : \res$ and $r_* : \resc$, 
$\wkbism{r}{\emb~r_*}$ iff $\bism{\norm~r~h}{r_*}$ for some  $h: \tresp{r}$.
\end{lemma}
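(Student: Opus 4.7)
The plan is to follow the same two-step schema as the analogous lemma for delay-free resumptions and responsiveness, with an extra ingredient to accommodate the new silent-divergence case.

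\emph{Forward direction.} Assume $\wkbism{r}{\emb~r_*}$. I first construct $h: \tresp{r}$ by coinduction, splitting on the constructor of $r_*$. When $r_*$ is $\retc{\st}$, $\inpc{f}$, or $\outpc{v}{r'}$, $\emb~r_*$ starts with the corresponding delayful action, so inversion on $\wkbism$ forces $r$ to converge to a matching action; the coinduction hypothesis supplies $\tresp{\cdot}$ for the residuals, giving $\tresp{r}$ via $\mathsf{comm\mbox{-}ret}$, $\mathsf{comm\mbox{-}in}$, or $\mathsf{comm\mbox{-}out}$. The crucial new case is $r_* = \bul$. Here $\emb~\bul = \delay{(\emb~\bul)}$ is silently divergent, and I need an auxiliary fact that weak bisimilarity transports divergence: if $\wkbism{r}{r_*'}$ and $\uresp{r_*'}$, then $\uresp{r}$. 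This is proved coinductively by inverting $\wkbism{r}{r_*'}$: the first three rules of $\wkbismnoargs$ would require $\steps{r_*'}{\cdot}$, which is ruled out by $\uresp{r_*'}$ using the already proved lemma that converging and diverging resumptions are not weakly bisimilar; so only the fourth ($\delta$/$\delta$) rule applies and coinduction closes the case. Applied here, it yields $\uresp{r}$, hence $\tresp{r}$ via $\mathsf{comm\mbox{-}div}$. With $h$ in hand, $\bism{\norm~r~h}{r_*}$ follows by coinduction along the same case split, using the definition of $\norm$.

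\emph{Backward direction.} I first prove an auxiliary lemma: for every $r$ and every $h: \tresp{r}$, $\wkbism{r}{\emb~(\norm~r~h)}$. This is by coinduction and case analysis on $h$. The $\mathsf{comm\mbox{-}ret}$, $\mathsf{comm\mbox{-}in}$, $\mathsf{comm\mbox{-}out}$ cases are immediate from the first three rules of $\wkbismnoargs$ together with the coinduction hypothesis on the subproofs. The $\mathsf{comm\mbox{-}div}$ case uses the corecursive identity $\emb~\bul = \delay{(\emb~\bul)}$ and the fact that $\uresp{r}$ forces $r = \delay{r'}$ with $\uresp{r'}$; the fourth rule of $\wkbismnoargs$ reduces the goal to $\wkbism{r'}{\emb~\bul}$, which is handled by a guarded appeal to the coinduction hypothesis. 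Given this lemma, I combine it with the assumed $\bism{\norm~r~h}{r_*}$: a routine coinductive check shows $\emb$ respects strong bisimilarity, so $\bism{\emb~(\norm~r~h)}{\emb~r_*}$, whence $\wkbism{\emb~(\norm~r~h)}{\emb~r_*}$ by the corollary that strong bisimilarity implies weak bisimilarity. Transitivity of $\wkbismnoargs$ then gives $\wkbism{r}{\emb~r_*}$.

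The main obstacle is the $\mathsf{comm\mbox{-}div}$/$\bul$ case in both directions: one must peel delays off $\emb~\bul$ coinductively using its defining equation while keeping the recursive call properly guarded, and symmetrically must argue from divergence of $\emb~\bul$ to divergence of $r$ without classical help. All other cases are parallel to the responsiveness version and are discharged routinely once these coinductive arguments are set up.
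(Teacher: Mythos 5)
Your proposal is correct, and since the paper omits the proof of this lemma (deferring to the Coq development), it can only be judged against the scaffolding the paper provides — which you follow exactly: the same two-step schema as the responsive/delay-free analogue, using the convergence-versus-divergence lemma, the setoid/equivalence properties of $\wkbismnoargs$, and the corollary that strong bisimilarity implies weak bisimilarity. Your treatment of the genuinely new $\bul$/$\mathsf{comm\mbox{-}div}$ case — transporting divergence along weak bisimilarity for the forward direction, and peeling $\emb~\bul = \delay{(\emb~\bul)}$ with a guarded corecursive call for the backward direction — is exactly the extra ingredient needed, so no gap remains.
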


In Figure~\ref{fig:classicexec}, we define the classical-style
semantics in terms of classical-style resumptions. We have an
inductive parameterized evaluation relation $\execcnoargs(X)$, defined
in terms of $X$, for any relation $X$, for convergent runs; its
inference rules are the same as those in the previous section. But we
also have a coinductive parameterized evaluation $\execdnoargs(X)$,
again defined in terms of $X$, for any relation $X$, for silently
diverging runs, so that $\execd{\execseqnoargs}{s}{\st}$ expresses
that running a statement $s$ from a state $\st$ diverges without
performing input or output.  It uses the inductive evaluation in case
the first statement of a sequence or the first iteration of the body
of a while-loop silently terminates, but the whole sequence or
while-loop silently diverges. Then we define coinductively
an extended evaluation relation $\execseqnoargs$, in terms of these
two evaluation relations, nesting the latter into the former. Finally,
we instantiate both $\execcnoargs$ and $\execdnoargs$ at
$\execseqnoargs$ to obtain the ``real'' evaluation relation
$\execCnoargs$. Note that, to derive an evaluation proposition in this
semantics, one has to decide upfront whether inductive or coinductive
evaluation should be used---a decision that can be made classically,
but not constructively.

\begin{figure}[t]
\[
\small
\begin{array}{c}
\infer{\execc{X}{\Assign{x}{e}}{\st}{\retc{(\update{\st}{x}{\eval{e}{\st}})}}
}{}
\qquad
\infer{\execc{X}{\Skip}{\st}{\retc{\st}}
}{}
\qquad
\infer{
  \execc{X}{\Seq{s_0}{s_1}}{\st}{r}
}{
  \execc{X}{s_0}{\st}{\retc{\st'}}
  &\execc{X}{s_1}{\st'}{r}
}
\\[1ex]
\infer{
  \execc{X}{\Seq{s_0}{s_1}}{\st}{\inpc{f'}}
}{
  \execc{X}{s_0}{\st}{\inpc{f}}
  & \forall v.\, \X{s_1}{f~v}{f'~v}
}
\quad
\infer{
  \execc{X}{\Seq{s_0}{s_1}}{\st}{\outpc{v}{r'}}
}{
  \execc{X}{s_0}{\st}{\outpc{v}{r}}
  &\X{s_1}{r}{r'}
}
\\[1ex]
\infer{
  \execc{X}{\Ifthenelse{e}{s_t}{s_f}}{\st}{r}
}{
  \istrue{\st}{e}
  &\execc{X}{s_t}{\st}{r}
}
\quad
\infer{
  \execc{X}{\Ifthenelse{e}{s_t}{s_f}}{\st}{r}
}{
  \isfalse{\st}{e}
  &\execc{X}{s_f}{\st}{r}
}
\\[1ex]
\infer{
  \execc{X}{\While{e}{s_t}}{\st}{r}
}{
  \istrue{\st}{e}
  &\execc{X}{s_t}{\st}{\retc{\st'}}
  &\execc{X}{\While{e}{s_t}}{\st'}{r}
}
\\[1ex]
\infer{
  \execc{X}{\While{e}{s_t}}{\st}{\inpc{f'}}
}{
  \istrue{\st}{e}
  &\execc{X}{s_t}{\st}{\inpc{f}}
  &\forall v.\, \X{\While{e}{s_t}}{f\, v}{f'\, v}
}
\quad 
\infer{
  \execc{X}{\While{e}{s_t}}{\st}{\outpc{v}{r'}}
}{
  \istrue{\st}{e}
  &\execc{X}{s_t}{\st}{\outpc{v}{r}}
  &\X{\While{e}{s_t}}{r}{r'}
}
\\[1ex]
\infer{
  \execc{X}{\While{e}{s_t}}{\st}{\retc{\st}}
}{
  \isfalse{\st}{e}
}
\\[1ex]
\infer{
  \execc{X}{\Input{x}}{\st}{\inpc{(\lambda v.\retc{\update{\st}{x}{v}})}}
}{}
\qquad
\infer{
  \execc{X}{\Output{e}}{\st}{\outpc{(\eval{e}{\st})}{(\retc{\st})}}
}{}
\\[3ex]
\infer={
  \execd{X}{\Seq{s_0}{s_1}}{\st}
}{
  \execd{X}{s_0}{\st}
}
\quad
\infer={
  \execd{X}{\Seq{s_0}{s_1}}{\st}
}{
  \execc{X}{s_0}{\st}{\retc{\st'}}
  &\execd{X}{s_1}{\st'}
}
\\[1ex]
\infer={
  \execd{X}{\Ifthenelse{e}{s_t}{s_f}}{\st}
}{
  \istrue{\st}{e}
  &\execd{X}{s_t}{\st}
}
\quad
\infer={
  \execd{X}{\Ifthenelse{e}{s_t}{s_f}}{\st}
}{
  \isfalse{\st}{e}
  &\execd{X}{s_f}{\st}
}
\\[1ex]
\infer={
  \execd{X}{\While{e}{s_t}}{\st}
}{
  \istrue{\st}{e}
  &\execd{X}{s_t}{\st}
}
\quad
\infer={
  \execd{X}{\While{e}{s_t}}{\st}
}{
  \istrue{\st}{e}
  &\execc{X}{s_t}{\st}{\retc{\st'}}
  &\execd{X}{\While{e}{s_t}}{\st'}
}
\\[2.5ex]
\infer={
  \execseq{s}{\retc{\st}}{r}
}{
  X \subseteq \execseqnoargs
  &
  \execc{X}{s}{\st}{r}
}
\quad
\infer={
  \execseq{s}{\retc{\st}}{\bul}
}{
  X \subseteq \execseqnoargs
  &
  \execd{X}{s}{\st}
}
\quad 
\infer={
  \execseq{s}{\inpc{f}}{\inpc{f'}}
}{
  \forall v.\, \execseq{s}{f~v}{f'~v}
}
\quad
\infer={
  \execseq{s}{\outpc{v}{r}}{\outpc{v}{r'}}
}{
  \execseq{s}{r}{r'}
}
\quad
\infer={\execseq{s}{\bul}{\bul}}{}
\\[2.5ex]
\infer{\execC{s}{\st}{r}}{
  \execc{\execseqnoargs}{s}{\st}{r}
}
\quad
\infer{\execC{s}{\st}{\bul}}{
  \execd{\execseqnoargs}{s}{\st}
}
\end{array}
\]

\caption{Classical-style big-step semantics}
\label{fig:classicexec}
\end{figure}

The classical-style semantics is adequate wrt.\ the basic semantics
of Section~\ref{sec:exec}.  

\begin{proposition}[Soundness]
For any $s$, $\st$ and $r : \resc$, if $\execC{s}{\st}{r}$, 
then there exists $r' : \res$ such that
$\exec{s}{\st}{r'}$ and $\wkbism{\emb~r}{r'}$.
\end{proposition}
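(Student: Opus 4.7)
Since the basic delayful semantics is total and deterministic up to bisimilarity, for any $(s, \st)$ there already exists some $r' : \res$ with $\exec{s}{\st}{r'}$; the real task is to establish $\wkbism{\emb~r}{r'}$. A derivation of $\execC{s}{\st}{r}$ uses either the $\execc{\execseqnoargs}{s}{\st}{r}$-rule (with $r \neq \bul$) or the $\execd{\execseqnoargs}{s}{\st}$-rule (with $r = \bul$), so I would split the goal into two claims:

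\textbf{(A)} If $\execc{\execseqnoargs}{s}{\st}{r_c}$ and $\exec{s}{\st}{r'}$, then $\wkbism{\emb~r_c}{r'}$.

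\textbf{(B)} If $\execd{\execseqnoargs}{s}{\st}$ and $\exec{s}{\st}{r'}$, then $\uresp{r'}$.

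From (B), the case $r = \bul$ of the original statement follows: $\emb~\bul = \delta~(\emb~\bul)$ is silently divergent, and any two silently divergent resumptions are weakly bisimilar by corecursive application of the $\delta$-rule of $\wkbismnoargs$.

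For (A), I would work with the Mendler-style formulation $\wkbismcnoargs$ (so corecursive invocations are properly guarded in Coq) and proceed by induction on the inductive derivation of $\execc{\execseqnoargs}{s}{\st}{r_c}$, inverting $\exec{s}{\st}{r'}$ in parallel. The base cases (Skip, Assign, Input, Output, While-false) leave $\emb~r_c$ and $r'$ differing by at most one $\delta$-step and are discharged directly by the $\wkbisminoargs{X}$-rules. The Seq and While-true cases contain premises of shape $\X{s_1}{\_}{\_}$ with $X = \execseqnoargs$, linking classical-style extended evaluation to its delayful counterpart; to handle these I would strengthen (A) with a companion claim at the level of extended evaluations, asserting that whenever classical-style $\execseqnoargs$ relates $r_c$ to $r_{c*}$ and delayful $\execseqnoargs$ relates some $r$ to $r^{**}$ with $\wkbism{\emb~r_c}{r}$, then $\wkbism{\emb~r_{c*}}{r^{**}}$. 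The companion claim is proved by outer coinduction on $\wkbismcnoargs$ whose finite $\wkbisminoargs{X}$-chunks are manufactured by the inner induction from (A).

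For (B), I coinduce on $\uresp{r'}$ and case-analyse the derivation of $\execd{\execseqnoargs}{s}{\st}$. A silent-divergence derivation may begin with a finite cascade of silently-converging $\execcnoargs$-steps (a Seq whose first statement classically converges before the second diverges, or a While iterating finitely many times before its body diverges), so an inner structural induction on this convergent prefix is needed in order to expose the first $\delta$ of $r'$ before the coinductive hypothesis is applied. The main obstacle is arranging the mixed induction-coinduction so that Coq's guardedness condition is satisfied: the Mendler-style $\wkbismcnoargs$ and a clean separation of the finite convergent prefix from the outer coinduction are the key devices, mirroring what was done earlier for the transitivity of $\wkbismcnoargs$.
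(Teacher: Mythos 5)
Your outline matches the approach the paper itself relies on: the paper omits an in-text proof of this proposition (deferring to the accompanying Coq development and illustrating only the key technique on the $\Scount$ example), and that technique is exactly what you describe --- split $\execCnoargs$ into its convergent part $\execcnoargs$ and divergent part $\execdnoargs$, prove the weak-bisimilarity goal by Mendler-style coinduction on $\wkbismcnoargs$ whose finite $\wkbisminoargs{X}$-chunks are manufactured by inner inductions on the $\execcnoargs$-derivations, and handle silent divergence by an inner induction that exposes a delay of the delayful resumption before the guarded corecursive call. The only point to make explicit is that your claim (B) is needed not just for the top-level $r=\bul$ but also inside the companion claim, since $\bul$ can occur nested via the rule $\execseq{s}{\retc{\st}}{\bul}$; your extended-evaluation companion claim absorbs this case without changing the structure of the argument, so the plan is sound.
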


\begin{proposition}[Completeness]
\label{prop:execexecd}
For any $s$, $\st$ and $r : \res$ and $h : \tresp{r}$, if $\exec{s}{\st}{r}$,
then $\execC{s}{\st}{\norm~r~h}$.
\end{proposition}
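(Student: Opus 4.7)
The proof proceeds analogously to that of Proposition~\ref{prop:execc_complete}, with an additional branch to handle silently diverging runs. The outer structure is a case analysis on the top-most constructor of the committedness witness $h : \tresp{r}$. If $h$ is $\mathsf{comm\mbox{-}div}\,h_0$ with $h_0 : \uresp{r}$, then $\norm\,r\,h = \bul$ and the target is $\execC{s}{\st}{\bul}$, i.e., $\execd{\execseqnoargs}{s}{\st}$. Otherwise, $h$ is built from a convergence witness $\steps{r}{r'}$, and the target is $\execc{\execseqnoargs}{s}{\st}{\norm\,r\,h}$. The two sub-goals are discharged by a pair of auxiliary lemmas: a convergent one, proved by induction on $\steps{r}{r'}$ together with inversion on $\exec{s}{\st}{r}$, and a divergent one, proved by coinduction on $\uresp{r}$, again with inversion on $\exec{s}{\st}{r}$.

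As in the $\Scount$ example and in the proof of Proposition~\ref{prop:execc_complete}, Mendler-style coinduction is deployed via an auxiliary relation $\Rnoargs$ designed so that $\Rnoargs \subseteq \execseqnoargs$ can be proved separately and so that the auxiliary lemmas produce $\execc{\Rnoargs}{s}{\st}{\cdot}$ and $\execd{\Rnoargs}{s}{\st}$ rather than their $\execseqnoargs$-instantiated versions directly. I take $\Rnoargs$ to be defined so that $\R{s}{\retc{\st'}}{r_*}$ iff there exist $r' : \res$ and $h' : \tresp{r'}$ with $\exec{s}{\st'}{r'}$ and $\bism{r_*}{\norm\,r'\,h'}$, propagated through $\outpc{v}{\cdot}$, $\inpc{\cdot}$ and $\bul$ prefixes in the obvious way. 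The inclusion $\Rnoargs \subseteq \execseqnoargs$ is then established by coinduction, dispatching the $\retc{\st'}$ case to the two auxiliary lemmas and using monotonicity of $\execcnoargs$ and $\execdnoargs$ in their parameter (Lemma~\ref{lemma:execc_mono} and its divergent analogue). The proposition follows by applying the auxiliary lemma appropriate to the constructor of $h$ to the given $\exec{s}{\st}{r}$ and $h$, with the Mendler parameter instantiated at $\Rnoargs$, and relaxing $\Rnoargs$ to $\execseqnoargs$.

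The hardest step will be the divergent auxiliary lemma in the case $s = \Seq{s_0}{s_1}$ (and symmetrically for $\While{e}{s_t}$), where the silent divergence witnessed by $\uresp{r}$ may take place entirely inside $s_1$ after $s_0$ has terminated. The coinductive hypothesis on $\uresp{r}$ is not immediately applicable: one must first consume the finite convergent prefix produced by $s_0$ using the convergent auxiliary lemma, and only then invoke the coinductive hypothesis on the residual $\uresp{r''}$ of the continuation. Hiding the self-reference behind the parameter $X$ of $\execdnoargs$ in Mendler style keeps the resulting coinductive invocation syntactically guarded in Coq, exactly as in the transitivity argument for $\wkbismcnoargs$ sketched earlier in the paper.
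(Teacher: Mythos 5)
The convergent half of your plan and the Mendler-style scaffolding (an auxiliary relation $\Rnoargs$, monotonicity of $\execcnoargs$ and of $\execdnoargs$ in their parameter, proving $\Rnoargs \subseteq \execseqnoargs$ by guarded coinduction, then relaxing) are in the spirit of the technique the paper demonstrates on the $\Scount$ example; the paper omits its own proof of this proposition, so there is nothing finer to compare against there. The genuine gap is in your divergent auxiliary lemma, exactly at the step you yourself call the hardest. For $s = \Seq{s_0}{s_1}$, inversion of $\exec{\Seq{s_0}{s_1}}{\st}{r}$ gives $\exec{s_0}{\st}{r_0}$ and $\execseq{s_1}{r_0}{r}$, and to build the coinductive derivation you must commit to one of the two sequence rules for $\execdnoargs$: either $\execd{X}{s_0}{\st}$, or $\execc{X}{s_0}{\st}{\retc{\st'}}$ together with $\execd{X}{s_1}{\st'}$. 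Your plan---``first consume the finite convergent prefix produced by $s_0$, then corecurse on the residual divergence''---presupposes that such a finite prefix exists, i.e.\ that $s_0$ terminates. But the hypotheses only give $\uresp{r}$ for the composite resumption: committedness of $r$ does not decompose into committedness of $r_0$, and the dichotomy $\uresp{r_0} \vee \exists \st'.\, \steps{r_0}{\ret{\st'}}$ is precisely the upfront convergence/divergence decision that the paper flags as available classically but not constructively. The same obstruction recurs in the while-loop case. Note also that Mendler-style guardedness does not help here: the problem is not productivity of the corecursion but the missing case distinction needed to pick a constructor at all.

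To close the gap you must make that decision available explicitly, e.g.\ by invoking the instance of excluded middle $\forall r_0.\, (\exists r_0'.\, \steps{r_0}{r_0'}) \vee \uresp{r_0}$ (the same instance used to prove that committedness holds of all resumptions classically). With it, the divergent lemma goes through by coinduction: if $r_0$ converges to $\ret{\st'}$, a decomposition lemma along $\execseqnoargs$ peels the finitely many delays, yields $\exec{s_1}{\st'}{r_1}$ with $\uresp{r_1}$, and the second rule applies with a guarded corecursive call; if $r_0$ converges to an input or output, this contradicts $\uresp{r}$; if $\uresp{r_0}$, the first rule applies corecursively. This is consistent with the paper's own framing that the classical-style semantics ``is only equivalent to the basic one classically'': the committedness witness $h:\tresp{r}$ supplies the decision only per silent segment of $r$, not per sequencing point of the program, so some classical principle (or an equivalent additional hypothesis) must be threaded through the divergent cases---your sketch as written never invokes one.
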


\begin{corollary}
\label{coro:execexecd}
Classically, for any $s$, $\st$ and $r : \res$, if $\exec{s}{\st}{r}$,
then there exists $r' : \resc$ such that $\execC{s}{\st}{r'}$ and
$\wkbism{r}{\emb~r'}$.
\end{corollary}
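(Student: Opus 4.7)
The plan is to combine Lemma~\ref{lemma:commitedtotal} (which classically gives committedness of every resumption) with the completeness Proposition~\ref{prop:execexecd}, and then close the weak bisimilarity side by reflexivity of strong bisimilarity.

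Concretely, suppose $\exec{s}{\st}{r}$. First, I would invoke Lemma~\ref{lemma:commitedtotal} to obtain, classically, a proof $h : \tresp{r}$. This is the only nonconstructive step in the argument and it is the reason the corollary is stated classically; once we have $h$ in hand, the rest is purely constructive. Set $r' := \norm~r~h : \resc$, which is well-defined thanks to $h$.

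Next, apply Proposition~\ref{prop:execexecd} to $\exec{s}{\st}{r}$ together with $h$ to conclude $\execC{s}{\st}{\norm~r~h}$, i.e., $\execC{s}{\st}{r'}$. For the bisimilarity obligation $\wkbism{r}{\emb~r'}$, I would use the lemma stated immediately before the classical-style big-step semantics (the ``iff'' characterizing when a delayful resumption is weakly bisimilar to an embedded classical-style one): instantiating $r_* := \norm~r~h$, it suffices to exhibit some committedness proof and show $\bism{\norm~r~h}{\norm~r~h}$. Taking the very same $h$ we already have, the required strong bisimilarity is just reflexivity of $\approx$ on $\resc$, which is a trivial coinductive proof.

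There is no hard step here; the whole proof is a short chain of already-established results. The only subtlety worth flagging is that the use of excluded middle is confined to obtaining $h : \tresp{r}$ via Lemma~\ref{lemma:commitedtotal}, and that the same $h$ must be reused both to compute $r'$ via $\norm$ and to discharge the hypothesis of Proposition~\ref{prop:execexecd} and the bisimilarity lemma, so that the witness $r'$ returned by the corollary actually matches the one for which evaluation has been proved.
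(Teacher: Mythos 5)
Your proposal is correct and is exactly the derivation the paper intends for this corollary: obtain $h:\tresp{r}$ classically from Lemma~\ref{lemma:commitedtotal}, feed $r$ and $h$ into the completeness Proposition~\ref{prop:execexecd} to get $\execC{s}{\st}{\norm~r~h}$, and discharge $\wkbism{r}{\emb~(\norm~r~h)}$ via the ``iff'' lemma relating weak bisimilarity with embedded classical-style resumptions, using reflexivity of strong bisimilarity. Your remark that the same committedness proof $h$ must be used both to form the witness $\norm~r~h$ and in the two cited results is the right point of care, and nothing further is needed.
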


The classical-style semantics is more expressive than
responsive semantics, since it offers the option of ``detected'' divergence.
In particular we have $\execC{\While{\true}{\Skip}}{\st}{\bul}$
and our interactive multipliers are assigned a classical-style 
resumption $\mult$ defined corecursively by 
$\mult = \inpc{(\lambda m.~\inpc{(\lambda n.~ \mathit{if}
~m \geq 0 ~\mathit{then} ~\outpc{(m*n)}{\mult} ~\mathit{else} ~\bul)})}$; i.e., we have 
$\execC{\SmultL}{\st}{\mult}$
and 
$\execC{\SmultH}{\st}{\mult}$.

% =========================================================================
% =========================================================================
% =========================================================================

\section{Related Work}

Formalized semantics are an important ingredient in the trusted
computing base of certified compilers. Proof assistants, like Coq, are
a good tool for such formalization projects, as both the object
semantics of interest and its metatheory can be developed in the same
framework.  For introductions, see \cite{Ber:surpls}.

To account for nontermination or silent divergence properly in a
big-step semantics is nontrivial already for languages without
interactive I/O. Leroy and Grall \cite{LG:coibso} introduced two
big-step semantics for lambda-calculus. One is classical in spirit,
with two evaluation relations, inductive and coinductive, for
terminating and diverging runs, and relies on decidability between
termination and divergence.  The other, with a single coinductive
evaluation relation, is essentially suited for constructive reasoning,
but contains a semantic anomaly (a function can continue reducing
after the argument diverges), which results from its ability to
collapse an infinite sequence of internal actions (contraction steps).

In our work \cite{NU:trabco} on While with nontermination, we
developed a trace-based coinductive big-step semantics where traces
were non-empty colists of intermediate states, agreeing with the
very standard coinductive small-step trace-based semantics. This semantics
relied on traces being a monad; a central component in the definition
was an extended evaluation relation, corresponding to the Kleisli
extension of evaluation. Capretta \cite{Cap:genrct} studied
constructive denotational semantics of nontermination as the Kleisli
semantics for the delayed state monad, corresponding to hiding the
intermediate states in the trace monad as internal actions and
quotienting by termination-sensitive weak bisimilarity. Rutten
\cite{Rut:notcwb} carried out a similar project in classical set
theory where the quotient is the state space extended with an extra
element for nontermination.

Operational semantics of interactive programs is most often described
in the small-step style where it amounts to a labelled transition
system. Especially, this is the dominating approach in process
calculi. Big-step semantics is closer to denotational semantics. In
this field, resumption-based descriptions go back to
Plotkin~\cite{Plo:dom}, Gunter et al.\ ~\cite{GMS:semdds} and
Cenciarelli and Moggi~\cite{CM:synmds}. Resumptions are a monad and
resumptions-based denotational semantics is a Kleisli semantics. Our
big-step semantics are directly inspired by this approach, except that
we work in a constructive setting and must take extra care to avoid
the need to invoke classical principles where they are dispensable.

We are not aware of many other works on constructive semantics of
interactive I/O. But similar in its spirit to ours is the work of
Hancock et al.~\cite{HPG:repspu} on stream processors and the stream
functions that these induce by ``eating''. Stream processors are like
our delay-free resumptions, except that the authors emphasize parallel
composition of stream processors (one processor's output becomes
another processor's input) and, for this to be well-defined, a stream
processor must not terminate and may only do a finite number of input
actions consecutively.  Hancock et al.~\cite{GHP:conffc} also
characterize realizable stream functions. In a precursor work, Hancock
and Setzer \cite{HS:intpdt} studied a model of interaction where a
client sends a server commands and expects responses.

Weak bisimilarity tends to be defined termination-insensitively,
identifying termination and divergence. In particular, this is also
the approach of CCS \cite{Mil:comc}.  Termination-sensitive weak
bisimilarity has been considered by 
Bergstra, Klop and Olderog~\cite{BKO:tswkbism}, 
Ku\v{c}era and Mayr
\cite{KM:weabfs} and Bohannon et al.~\cite{BPSWZ:rean}, but only in
what we call the classical-style version, relying on decisions 
between convergence and divergence.
(The weak bisimilarity of Capretta
\cite{Cap:genrct} is termination-sensitive and tailored for constructive
reasoning, but restricted to behaviours without I/O. Weak bisimilarity also motivated the study of Danielsson and Altenkirch~\cite{DA:mixic} on mixed induction-coinduction.)

Mixed inductive-coinductive definitions in the form of induction
nested into coinduction \linebreak ($\nu X.\, \mu Y.\, F\, (X,Y)$ or, more generally, $\nu X.\, G(\mu Y.\, F\, (X,Y), X)$) seem to be
quite fundamental in applications (e.g., the stream processors of
Hancock et al., our delay-free semantics).  Danielsson and
Altenkirch~\cite{DA:mixic,DA:subdem} argue for making this mix the
basic form of inductive-coinductive definitions in the
dependently-typed programming language Agda.  Unfortunately, nestings
the other way around (definitions $\mu X.\, \nu Y.\, F\, (X,Y)$) seem
to become difficult or impossible to code. With our approach,
coinduction nested into induction is handled
symmetrically to induction nested into coinduction~\cite{NU:mixicw}.

Mendler-style (co)recursion originates from Mendler~\cite{Men:indttc}.
It uses that a monotone (co)inductive definition is equivalent to a
positive one, via a syntactic left (right) Kan extension along
identity (instead of $\mu X.\, F\, X$ one works with $\mu X.\, \exists
Y.\, (Y \to X) \to F\, Y $). We exploited this fact to enable Coq's
guarded corecursion for a coinductive definition with a nested
inductive definition, at the price of impredicativity.

We have previously developed and formalized a Hoare logic for the
trace-based semantics of While with nontermination \cite{NU:hoalct}. A
similar enterprise should be possible for resumptions, weak
bisimilarity and While with interactive I/O.

% =========================================================================
% =========================================================================
% =========================================================================

\section{Conclusion}

We have developed a constructive treatment of resumption-based
big-step semantics of While with interactive I/O. We have devised
constructive-style definitions of important concepts on resumptions
such as termination-sensitive weak bisimilarity and responsiveness,
and devised two variations of delay-free big-step semantics for
programs that produce responsive and committed resumptions,
respectively.  Responsiveness is for interactive computation what
termination is for noninteractive computation. And likewise,
committedness compares to a decided domain of definedness.  Indeed, all
three variations of big-step semantics for While with interactive I/O
have counterparts in big-step semantics for
noninteractive While (see Appendix).  Mathematically, we find it
reassuring that observations made for a more simpler noninteractive
While naturally scale to a more involved language with interactive I/O.  
The central
ideas are a concept of termination-sensitive weak bisimilarity
tailored for constructive reasoning and the organization of 
evaluation in the delay-free semantics as an induction nested into coinduction.

Technically, we have carried out an advanced exercise in programming
and reasoning with mixed induction and coinduction, which we have also
formalized in Coq. The challenges in this exercise were both
mathematical and tool-related (Coq-specific). We deem that the
mathematical part was more interesting and important. The main new
aspect in comparison to our earlier development of coinductive
trace-based big-step semantics for noninteractive While was the need
to deal with definitions of predicates that nest induction into
coinduction---a relatively unexplored area in type theory.  In Coq, we
formalized them by parameterizing the inductive definition and
converting the coinductive definition into Mendler-like format.
Apparently, this technique is novel for the Coq community.

As future work, we would like to scale our development to concurrency.

\paragraph{Acknowledgments}

We thank Andreas Lochbihler, Nils Anders Danielsson and Thorsten
Altenkirch for discussions.

This research was supported by the Estonian Centre of Excellence in
Computer Science, EXCS, funded by the European Regional Development
Fund, and the Estonian Science Foundation grant no.~6940.

% =========================================================================
% =========================================================================
% =========================================================================

\appendix 

\section{Resumptions, Weak Bisimilarity, Delayful, Delay-Free and
  Classical-Style Big-step Semantics for While}

The notions of resumptions and weak bisimilarity and the evaluation
relations in the three big-step semantics shown of the main text are
fairly involved, because of the amount of detail. Therefore, we also
spell out what they specialize (or degenerate) to in the case of
ordinary non-interactive While, to better highlight the phenomena that
arise even in the absence of interaction.

\subsection{Resumptions, Bisimilarity, Weak Bisimilarity}

Delayful resumptions, with their strong bisimilarity, specialize to
delayed states $r : \res$ \`a la Capretta~\cite{Cap:genrct} defined
coinductively.
\[
\small
\infer={ \ret{\st} : \res}{ \st : \state}
\qquad
\infer={
  \delay{r} : \res
}{
  r : \res
}
\hspace*{3cm}
\infer={
  \bism{\ret{\st}}{\ret{\st}}
}{}
\qquad
\infer={
  \bism{\delay{r}}{\delay{r_*}}
}{
  \bism{r}{r_*}
}
\]
Convergence and (silent) divergence are defined inductively resp.\
coinductively; convergence reduces to termination at a final state.
\[
\small
\infer{
  \steps{\ret{\st}}{\ret{\st}}
}{}
\qquad
\infer{
  \steps{\delay{r}}{r'}
}{
  \steps{r}{r'}
}
\hspace*{3cm}
\infer={
  \uresp{\delay{r}}
}{
  \uresp{r}
}
\]
Responsiveness reduces to termination. Commitedness becomes
decidability between and termination and divergence. Commitedness is
tautologically true only classically.

Weak bisimilarity is defined in terms of convergence coinductively
exactly as Capretta~\cite{Cap:genrct} did.
\[
\small
\begin{array}{c}
\infer={
  \wkbism{r}{r_*}
}{
  \steps{r}{\ret{\st}} & \steps{r_*}{\ret{\st}}
}
\quad
\infer={
  \wkbism{\delay{r}}{\delay{r_*}}
}{
  \wkbism{r}{r_*}
}
\end{array}
\]
Any terminating delayed state can be normalized into a state. Any
decided delayed state can be normalized into a choice between a state
or a special divergence token.

\subsection{Delayful Semantics}

In the delayful big-step semantics, evaluation and extended evaluation
are defined mutually coinductively as follows.
\[
\small
\begin{array}{c}
\quad
\infer={\exec{\Assign{x}{e}}{\st}{\delay{(\ret{\update{\st}{x}{\eval{e}{\st}}})}}
}{}
\qquad
\infer={\exec{\Skip}{\st}{\ret{\st}}
}{}
\qquad
\infer={
  \exec{\Seq{s_0}{s_1}}{\st}{r'}
}{
  \exec{s_0}{\st}{r}
  &\execseq{s_1}{r}{r'}
}
\\[1ex]
\infer={
  \exec{\Ifthenelse{e}{s_t}{s_f}}{\st}{r}
}{
  \istrue{\st}{e}
  &\execseq{s_t}{\delay{(\ret{\st})}}{r}
}
\quad
\infer={
  \exec{\Ifthenelse{e}{s_t}{s_f}}{\st}{r}
}{
  \isfalse{\st}{e}
  &\execseq{s_f}{\delay{(\ret{\st})}}{r}
}
\\[1ex]
\infer={
  \exec{\While{e}{s_t}}{\st}{r'}
}{
  \istrue{\st}{e}
  &\execseq{s_t}{\delta{(\ret{\st})}}{r}
  &\execseq{\While{e}{s_t}}{r}{r'}
}
\quad
\infer={
  \exec{\While{e}{s_t}}{\st}{\delay{(\ret{\st})}}
}{
  \isfalse{\st}{e}
}
\\[3ex]
\infer={
  \execseq{s}{\ret{\st}}{r}
}{
  \exec{s}{\st}{r}
}
\qquad
\infer={
  \execseq{s}{\delay{r}}{\delay{r'}}
}{
  \execseq{s}{r}{r'}
}
\end{array}
\]
We have previously \cite{NU:trabco} conducted a thorough study of a
variation of this semantics (with intermediate states instead of
delays), explaining the design considerations in great detail. We have
also \cite{NU:hoalct} developed a Hoare logic for this semantics.

\subsection{Delay-Free Semantics}

Delay-free resumptions are the same as states.

In the delay-free semantics, there is one inductive evaluation
relation for terminating configurations. There is no need for a
separate extended evaluation relation (which would coincide with
evaluation anyhow, since resumptions and states are the same thing)
and no need to parameterize the evaluation relation.
\[
\small
\begin{array}{c}
\quad
\infer{\execcni{\Assign{x}{e}}{\st}{\update{\st}{x}{\eval{e}{\st}}}}{
}
\qquad
\infer{\execcni{\Skip}{\st}{\st}}{
}
\qquad 
\infer{
  \execcni{\Seq{s_0}{s_1}}{\st}{\st''}
}{
  \execcni{s_0}{\st}{\st'}
  &\execcni{s_1}{\st'}{\st''}
}
\\[1ex]
\infer{
  \execcni{\Ifthenelse{e}{s_t}{s_f}}{\st}{\st'}
}{
  \istrue{\st}{e}
  &\execcni{s_t}{\st}{\st'}
}
\quad
\infer{
  \execcni{\Ifthenelse{e}{s_t}{s_f}}{\st}{\st'}
}{
  \isfalse{\st}{e}
  &\execcni{s_f}{\st}{\st'}
}
\\[1ex]
\infer{
  \execcni{\While{e}{s_t}}{\st}{\st''}
}{
  \istrue{\st}{e}
  &\execcni{s_t}{\st}{\st'}
  &\execcni{\While{e}{s_t}}{\st'}{\st''}
}
\qquad
\infer{
  \execcni{\While{e}{s_t}}{\st}{\st}
}{
  \isfalse{\st}{e}
}
\end{array}
\]
The delay-free semantics agrees with the delayful semantics for
terminating delayed states.

It is the textbook big-step semantics of While, which accounts for
terminating configurations and assigns no evaluation result to
diverging configurations.

\subsection{Classical-Style Semantics}

A classical-style resumption is a state or the special token
$\bul$ for divergence.
\[
\small
\begin{array}{c}
\infer={ \retc{\st} : \resc}{ \st : \state}
\quad
\infer={\bul : \resc}{}
\end{array}
\]

The classical-style semantics has an inductively defined terminating
evaluation relation (defined exactly as that of the delay-free
semantics) and a coinductively defined diverging evaluation relation.
The latter depends on the former, but not the other way around. There
is no need for an extended evaluation relation.

\[
\small
\begin{array}{c}
\quad
\infer{\execcni{\Assign{x}{e}}{\st}{\update{\st}{x}{\eval{e}{\st}}}}{
}
\qquad
\infer{\execcni{\Skip}{\st}{\st}}{
}
\qquad 
\infer{
  \execcni{\Seq{s_0}{s_1}}{\st}{\st''}
}{
  \execcni{s_0}{\st}{\st'}
  &\execcni{s_1}{\st'}{\st''}
}
\\[1ex]
\infer{
  \execcni{\Ifthenelse{e}{s_t}{s_f}}{\st}{\st'}
}{
  \istrue{\st}{e}
  &\execcni{s_t}{\st}{\st'}
}
\quad
\infer{
  \execcni{\Ifthenelse{e}{s_t}{s_f}}{\st}{\st'}
}{
  \isfalse{\st}{e}
  &\execcni{s_f}{\st}{\st'}
}
\\[1ex]
\infer{
  \execcni{\While{e}{s_t}}{\st}{\st''}
}{
  \istrue{\st}{e}
  &\execcni{s_t}{\st}{\st'}
  &\execcni{\While{e}{s_t}}{\st'}{\st''}
}
\qquad
\infer{
  \execcni{\While{e}{s_t}}{\st}{\st}
}{
  \isfalse{\st}{e}
}
\\[2ex]
\infer={\execdni{s_0;s_1}{\st}}{
  \execdni{s_0}{\st}
}
\qquad
\infer={\execdni{s_0;s_1}{\st}}{
  \execcni{s_0}{\st}{\st'}
  &
  \execdni{s_1}{\st'}
}
\qquad
\infer={
  \execdni{\Ifthenelse{e}{s_t}{s_f}}{\st}
}{
  \istrue{\st}{e}
  &\execdni{s_t}{\st}
}
\quad
\infer={
  \execdni{\Ifthenelse{e}{s_t}{s_f}}{\st}
}{
  \isfalse{\st}{e}
  &\execdni{s_f}{\st}
}
\\[1ex]
\infer={
  \execdni{\While{e}{s_t}}{\st}}{
  \istrue{\st}{e}
  &\execdni{s_t}{\st}
}
\qquad
\infer={
  \execdni{\While{e}{s_t}}{\st}}{
  \istrue{\st}{e}
  &\execcni{s_t}{\st}{\st'}
  &\execdni{\While{e}{s_t}}{\st'}
}
\\[2.5ex]
\infer{\execC{s}{\st}{\st'}}{
  \execcni{s}{\st}{\retc{\st'}}
}
\quad
\infer{\execC{s}{\st}{\bul}}{
  \execdni{s}{\st}
}
\end{array}
\]

The classical-style semantics agrees with the delayful semantics for
decided delayed states (classically, any delayed state is decided).

A semantics in this spirit (with separate convergent and divergent
evaluation relations) was proposed for untyped lambda calculus by
Leroy and Grall \cite{LG:coibso}.

The delayful semantics (together with the identification of weakly
bisimilar delayed states) and the classical-style semantics have the
same purposes, but the delayful semantics is better behaved from the
constructive point-of-view. As a practical consequence, it has the
advantage that the evaluation relation can be turned into a function
(highly desirable, if one wants to be able to directly execute the
big-step semantics). This is not possible with the classical-style
semantics, as one would have to be able to decide whether a
configuration terminates before actually running it.

% =========================================================================
% =========================================================================
% =========================================================================

\end{document}